\newtheorem{theorem}{Theorem}
\newtheorem{remark}{Remark}
\newtheorem{assumption}{Assumption}
\begin{document}

\title{Composite Triggered Intermittent Control for Constrained Spacecraft Attitude Tracking}

\author{Jiakun Lei}
\affil{Zhejiang University, China} 

\author{Tao Meng}
\affil{Zhejiang University, China\\ Hainan Research Institute of Zhejiang University, China} 

\author{Kun Wang}
\affil{Zhejiang University, China}

\author{Weijia Wang}
\affil{Zhejiang University, China}

\author{Shujian Sun}
\affil{Zhejiang University, China}


\authoraddress{Jiakun Lei, Kun Wang, Weijia Wang, Shujian Sun are with the School of Aeronautics and Astronautics, Zhejiang University, Hangzhou 310027, China (Email: leijiakun@zju.edu.cn; wang\_kun@zju.edu.cn; weijiawang@zju.edu.cn; sunshujian@zju.edu.cn). Tao Meng(Corresponding Author) is with the School of Aeronautics and Astronautics, Zhejiang University, Hangzhou 310027, China, and Hainan Reserach Institute of Zhejiang University, Sanya, 572025, China (Email: mengtao@zju.edu.cn).}

\markboth{Lei ET AL.}{SAPPC of Attitude Tracking}
\maketitle

\begin{abstract}
	This paper focuses on the spacecraft attitude control problem with intermittent actuator activation, taking into account the attitude rotation rate limitation and input saturation issue simultaneously. To address this problem, we first propose a composite event-trigger mechanism, which composed of two state-dependent trigger that governing the activation and deactivation of actuators. Subsequently, by introducing the cascaded decomposition of Backstepping control philosophy, the designed trigger mechanism is then applied to the decomposed dynamical subsystem, providing a layered intermittent stabilization strategy.
	Further, the basic intermittent attitude controller is extended to a "constrained version" by introducing a strictly bounded virtual control law and an input saturation compensation auxiliary system.
	By analyzing the local boundedness of the system on each inter-event time interval, a uniformly, strictly decreasing upper boundary of the lumped system is further characterized, thereby completing the proof of the system's uniformly ultimately boundedness (UUB). Finally, numerical simulation results are illustrated to demonstrate the effectiveness of the proposed scheme.
\end{abstract}

\begin{IEEEkeywords}			
	Constrained Attitude Control; Intermittent Control; Event-trigger control	
\end{IEEEkeywords}
%

\section{INTRODUCTION}
The Intermittent Control (IC) problem has become a topic of significant interest in recent years due to its practical meaning for real control systems \cite{zochowski2000intermittent,hu2013sliding}. This problem is particularly relevant for spacecraft attitude control systems, as actual actuators intermittently exert control efforts at discrete time intervals with a calculated control width, instead of continuously or impulsively.
On the other hand, owing to the fact that spacecraft spend most of their lifecycle in a steady-state maintenance mode, the conventional periodic control strategy may be unnecessary and overly conservative. Meanwhile, since control actuators own limited lifespan, conventional periodic control strategy may not be beneficial for longing its service life.
These problems highlight the need for a control policy to achieve the desired attitude control while considerably reducing the actuator's acting frequency.
Motivated by these problems, this paper aims to develop an intermittent control framework for spacecraft attitude tracking. To facilitate the proposed scheme's potential application, the attitude rotation rate limitations and input saturation issue is further considered simultaneously.

For existing researches on the intermittent control strategy, there are mainly two different technical paths, classified as periodically intermittent control (PIC) and aperiodically intermittent control (AIC).
The Periodically Intermittent Control (PIC) approach views the system as a time-dependent switching system that exhibits different closed-loop dynamics based on fixed control duration and non-control rest time. This perspective has been explored for stabilizing general nonlinear systems and network systems in previous studies (e.g., \cite{li2007stabilization, chen2016delay, li2007exponential, huang2009stabilization,yu2012synchronization,li2016complete}). Typically \cite{li2007stabilization}, the intermittent controller is modeled to be switched between 0 and a proportional feedback control law (e.g., $u=Ke(t)$), and the sufficient stability condition is further derived using Lyapunov-based analysis and Linear Matrix Inequality (LMI) methodology.
However, since the PIC approach strictly follows a pre-calculated trigger sequence without considering the current circumstance, it may have conservatism problem.

In order to remove the limitation problem of the PIC approach, the Aperiodically Intermittent Control (AIC) approach has been developed to introduce flexibility and adaptability to switching conditions \cite{liang2018exponential,liu2021exponential}. This approach has been investigated in several studies, including nonlinear dynamical systems, complex networks, and synchronization of coupling systems, as discussed in \cite{liang2018exponential,liu2021exponential,liu2015synchronization}.
The Time-dependent AIC control (T-AIC) for nonlinear dynamical systems is presented in \cite{liu2021exponential}, which utilizes the average dwell-time condition technique stated in \cite{hespanha2008lyapunov,dashkovskiy2017input,liu2018input} to derive a sufficient condition for exponential stability. Although T-AIC methods provide less conservatism than PIC methods, they still lack a fully direct consideration of current status, as mentioned by previous research \cite{liu2020stabilization}.

Meanwhile, the development of the event-triggered control has raised significant interest, as stated by numerous references \cite{girard2014dynamic,eqtami2010event,wu2018event,liu2020event}. In regarding of this, the state-dependent Event-triggered Aperiodically Intermittent Control (E-AIC) scheme was firstly presented in \cite{liu2021exponential}, which provides a threshold-dependent trigger mechanism to check if the exponential convergence condition is satisfied or not. Nevertheless, it should be noted that this approach is only partially event-triggered", as the control duration (i.e., control width) is a fixed parameter that selected in advance. Since the trigger condition that is suitable for the transition process may differ from that suitable for steady-state control, the presented trigger mechanism may lead to high-frequency triggering. This point of view is also mentioned in \cite{liu2023dynamic}. 

Recently, a novel idea has been proposed in \cite{ong2022stability}, which can be classified as an E-AIC method. This method utilizes a certificate bound to determine whether the actuator should be activated or deactivated \cite{koga2022event}. By imposing a predefined upper boundary on the time evolution of the Lyapunov certificate, this approach ensures that it remains within an admissible asymptotically-converged bound, thereby guaranteeing stability even when actuators are switched off. This E-AIC methodology has been further applied in \cite{ong2022stability} to develop a CLF-QP based controller.

Although the mentioned work that stated in \cite{ong2022stability} provides a novel perspective, there are still some problems that are worth noticing. Specifically, the control methodology presented in the literature guarantees stability through a CLF-QP (Control Lyapunov Function Quadratic Programming) based approach, and the satisfaction of constraints is achieved through a control barrier function technique. However, this may result in a heavy computational burden, which is a significant concern for micro-satellites with limited computing capability. Therefore, it is meaningful to investigate whether the main idea of such a fully event-triggered framework can be integrated with various general nonlinear control frameworks. Additionally, the feasibility of constrained attitude control using this approach warrants further investigation.

Motivated by these issues, this paper aims to develop a direct method to yield an intermittent control strategy compatible with general nonlinear control frameworks. 
The main contribution can be concluded as follows:

\textbf{1.} Firstly, this paper proposes a composite event-trigger mechanism, which efficiently handles the alternatively switching of actuator's turn on and off. Subsequently, by further combining this framework with backstepping control philosophy, we further provide a complete layered intermittent stabilizing strategy, which enable us to realize the attitude control with much lower actuator acting frequency. In this paper, we simply extended the controller to a "constrained attitude control" version, which further validates the extensibility of the proposed framework.

\textbf{2.} Motivated by the analysis method in switching control theory, a systematic approach for the stability evaluation of such a nonlinear intermittent controller is firstly presented.
Specifically, the proof of the local boundedness of the lumped Lyapunov trajectory is provided both for turn-on and turn-off inter-event time, suggesting that the system's time-evolution trajectory is strictly beneath an exponentially-convergedn function. By further utilizing a characteristic of exponentially-converged functions, a uniformly continuous boundary of the system's Lyapunov trajectory is further characterized over the entire time domain, thereby proving the system's UUB (Uniformly Ultimate Boundedness) property. Additionally, the system's residual set is strongly related to the design parameter of the trigger mechanism. This provides a practical suggestion for parameter selection, which enhances the potential application value of the proposed scheme.

\textit{Notations}
This paper defines the following notations for the upcoming analysis: 
$\|\cdot\|$ represents the induced norm of arbitrary given matrix, or the Euclidean norm of any given vector. The $i$ th component of arbitrary given vector $\boldsymbol{x}$ is represented as $x_{i}$. The element-spanned diagonal matrix is denoted as $\text{diag}\left(a_{i}\right)$, whose diagonal elements are sequentially given as $a_{1},...a_{i}$. $\text{vec}\left(a_{i}\right)$ represents the element-spanned column vector, corresponding to $\left[a_{1},...a_{i}\right]^{\text{T}}$. $\boldsymbol{a}^{\times}\in\mathbb{R}^{3\times 3}$ represents the corresponding cross manipulation matrix of $\boldsymbol{a}\in\mathbb{R}^{3}$, such that $\boldsymbol{a}^{\times}\boldsymbol{b} = \boldsymbol{a} \times \boldsymbol{b}$ holds ($\boldsymbol{b}\in\mathbb{R}^{3}$). Correspondingly, the $n\times n$ identity matrix is denoted as $\boldsymbol{I}_{n}\in\mathbb{R}^{n\times n}$. Further, we denote the Earth-Central-Inertial (ECI) Frame as $\mathfrak{R}_{i}$, while the spacecraft body-fixed frame is denoted by $\mathfrak{R}_{b}$.

\section{PROBLEM FORMULATION}\label{secproblem}
\subsection{System Modeling}
Considering the attitude error kinematics and dynamics equation modeled with the unit attitude quaternion, the attitude error system can be expressed as follows\cite{lei2023singularity}:
\begin{equation}\label{sys}
	\begin{aligned}
		\dot{\boldsymbol{q}}_{ev} &= \boldsymbol{\varGamma}_{e}\boldsymbol{\omega}_{e}\\
		\dot{q}_{e0} &= -\frac{1}{2}\boldsymbol{q}^{\text{T}}_{ev}\boldsymbol{\omega}_{e}\\ 
		\boldsymbol{J}\dot{\boldsymbol{\omega}}_{e} &= \boldsymbol{\Omega}_{e} + \boldsymbol{u} + \boldsymbol{d}
	\end{aligned}
\end{equation}

Let $\boldsymbol{q}_{e} \triangleq \left[\boldsymbol{q}^{\text{T}}_{ev},q_{e0}\right]^{\text{T}}\in\mathbb{R}^{4}$ denotes the attitude error quaternion of the spacecraft with respect to the inertial frame $\mathfrak{R}_{i}$, where $\boldsymbol{q}_{ev} = \left[q_{ev1},...q_{ev3}\right]^{\text{T}}\in\mathbb{R}^{3}$ and $q_{e0}\in\mathbb{R}$ stands for the vector part and the scalar part of $\boldsymbol{q}_{e}$, respectively. The error angular velocity of the spacecraft with respect to the inertial frame $\mathfrak{R}_{i}$ is denoted as $\boldsymbol{\omega}_{e} \triangleq \boldsymbol{\omega}_{s} - \boldsymbol{C}_{e}\boldsymbol{\omega}_{d} \in\mathbb{R}^{3}$, expressed in the current body-fixed frame $\mathfrak{R}_{b}$. Here $\boldsymbol{\omega}_{s}\in\mathbb{R}^{3}$ stands for the current angular velocity expressed in frame $\mathfrak{R}_{b}$, and $\boldsymbol{\omega}_{d}\in\mathbb{R}^{3}$ denotes the desired angular velocity expressed in the target attitude, $\boldsymbol{C}_{e}\in\mathbb{R}^{3\times 3}$ represents the transformation matrix from the target body-fixed frame to the current one. $\boldsymbol{u}\in\mathbb{R}^{3}$ stands for the actual exerted actuator output, and the system's lumped disturbance is denoted as $\boldsymbol{d}$. Additionally,  $\boldsymbol{J}\in\mathbb{R}^{3\times 3}$ stands for the inertia matrix, 
$\boldsymbol{\Omega}_{e}$ denotes a lumped dynamical term, defined as
$\boldsymbol{\Omega}_{e} \triangleq \boldsymbol{J}\boldsymbol{\omega}^{\times}_e\boldsymbol{C}_e\boldsymbol{\omega}_d 
- \boldsymbol{J}\boldsymbol{C}_e\dot{\boldsymbol{\omega}}_d
-\boldsymbol{\omega}_s^{\times}\boldsymbol{J}\boldsymbol{\omega}_s \in\mathbb{R}^{3}$. The Jacobian matrix of the error kinematics equation is defined as $\boldsymbol{\varGamma}_{e} = \frac{1}{2}\left(q_{e0}\boldsymbol{I}_{3}+\boldsymbol{q}^{\times}_{ev}\right)\in\mathbb{R}^{3\times3}$, correspondingly.

Subsequently, we consider the system's dynamical model with event-trigger mechanism
and input saturation issue. Firstly, we consider the impact of the event-trigger mechanism. Let $\boldsymbol{\tau}\in\mathbb{R}^{3}$ be the calculated control law, and denoting the sample-and-hold control signal derived by event-trigger mechanism as $\boldsymbol{\tau}_{e}\in\mathbb{R}^{3}$, an error variable can be defined as $\boldsymbol{e}_{\tau} \triangleq \boldsymbol{\tau}$ - $\boldsymbol{\tau}_{e}\in\mathbb{R}^{3}$ accordingly, as mentioned in \cite{wu2018event}.
We further consider the saturation issue. Since $\boldsymbol{\tau}_{e}$ is the desired output signal for actuators, a saturation variable can be defined as $\Delta\boldsymbol{\tau} = \boldsymbol{\tau}_{e} - \boldsymbol{u}\in\mathbb{R}^{3}$.
Therefore, applying the definition of $\boldsymbol{\tau}$, $\boldsymbol{\tau}_{e}$, $\boldsymbol{e}_{\tau}$ and $\Delta\boldsymbol{\tau}$, the actual actuator output $\boldsymbol{u}$ can be given as $\boldsymbol{u} = \boldsymbol{\tau} - \boldsymbol{e}_{\tau} - \Delta\boldsymbol{\tau}$. This relationship can be further clarified in Figure \ref{signal}.
\begin{figure}[hbt!]
	\centering 
	\includegraphics[width=0.4\textwidth]{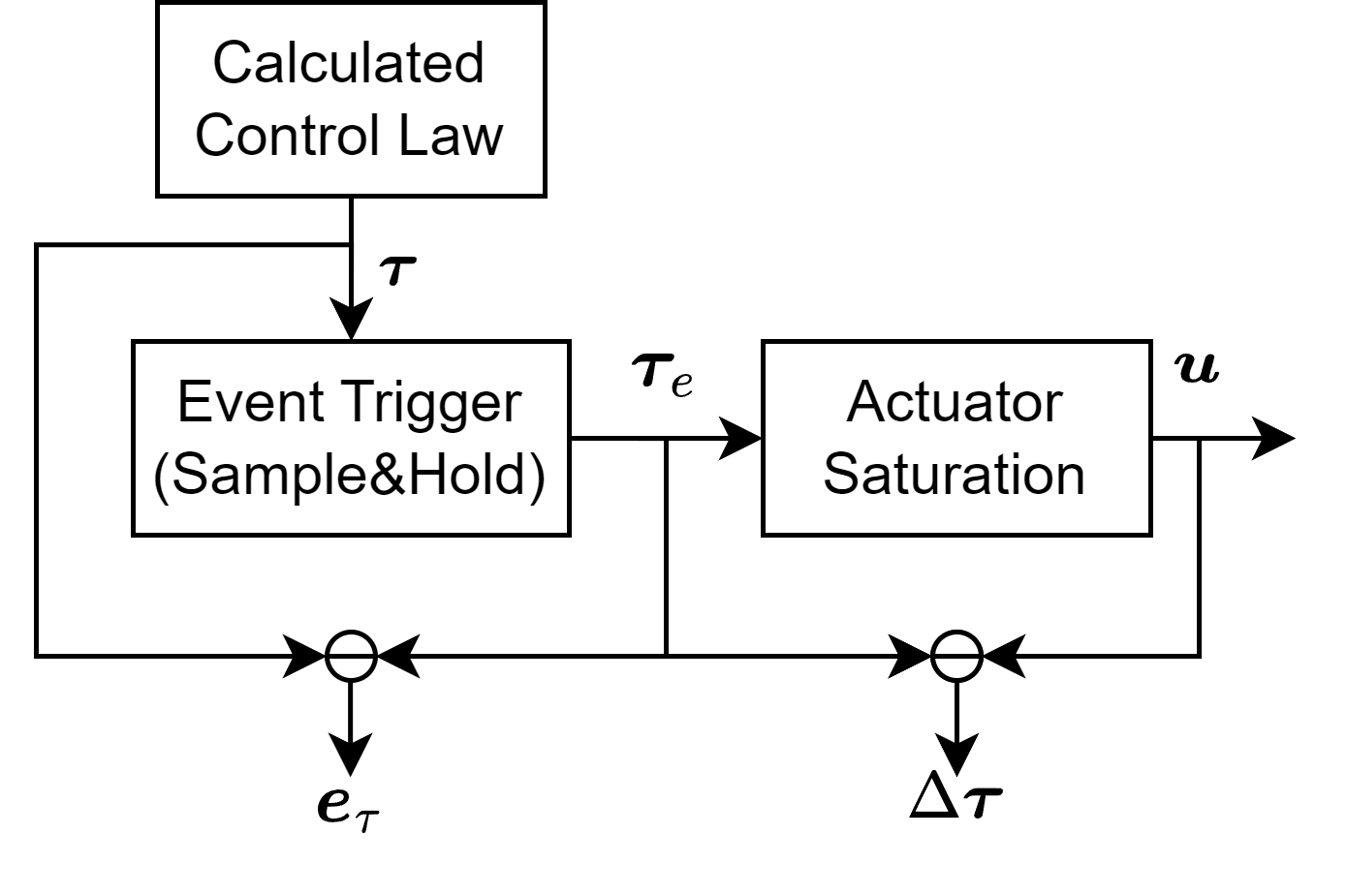}
	\caption{The Route of Control Signal}       
	\label{signal}   
\end{figure}

Accordingly, the dynamics equation (\ref{sys}) can be reformulated as follows:
\begin{equation}\label{sysetc}
	\boldsymbol{J}\dot{\boldsymbol{\omega}}_{e} = \boldsymbol{\Omega}_{e} + \boldsymbol{\tau}- \boldsymbol{e}_{\tau} - \Delta\boldsymbol{\tau} + \boldsymbol{d}
\end{equation}

The following assumptions are made for the synthesize of the controller:
\begin{assumption}\label{ASSJ}
	The inertia matrix $\boldsymbol{J}$ of the spacecraft is a known positive-definite matrix. Let $\lambda_{J\min}$, $\lambda_{J\max}$ be the minimum and the maximum of $\boldsymbol{J}$'s eigenvalue, we have $\lambda_{J\min}\|\boldsymbol{x}\|^{2}\le\boldsymbol{x}^{\text{T}}\boldsymbol{J}\boldsymbol{x} \le 	\lambda_{J\max}\|\boldsymbol{x}\|^{2}$ for all $\boldsymbol{x}\in\mathbb{R}^{3}$.
\end{assumption}
\begin{remark}
	Since the inertial uncertainty issue is not this paper's main consideration, we make the inertia matrix $\boldsymbol{J}$ a known one just for convenience. Notably, the presented framework can be extended to the condition that parameter uncertainty exists.
\end{remark}

\begin{assumption}\label{ASSD}
	The lumped disturbance $\boldsymbol{d}$ is unknown, but with a known upper boundary $D_{m}$, i.e., $\|\boldsymbol{d}\| \le D_{m}$ holds for all $t\in\left[0,+\infty\right)$.
\end{assumption}
\begin{assumption}\label{ASSGAMM}
	The vector part of the attitude error quaternion is non-zero for all $t\in\left[0,+\infty\right)$, i.e. $q_{e0}(t)\neq 0$ is satisfied all along.
\end{assumption}
Since $ q_{e0}\left(t\right) = 0$ means the total divergence of the attitude error system, which will actually not happen if the initial condition $ q_{e0}\left(0\right) \neq 0$ holds and the system is properly handled. Thus, such an assumption is reasonable.

\subsection{Constraint Description}\label{CONS}
\subsubsection{Attitude Rotation Rate Limitation}
For arbitrary upper bound constant given as $\Omega_{\max} > 0$, $\|\boldsymbol{\omega}_{s}\| \le \Omega_{\max}$ should be satisfied during the whole control process.

\subsubsection{Input Saturation Constraint}
Given arbitrary upper bound constant of maximum output value as $U_{\max} > 0$, for each $i(i = 1,2,3)$-th actuator axis, we have:
\begin{equation}
	u_{i} = 
	\begin{cases}
		\tau_{ei} &\quad, |\tau_{ei}| \le U_{\max}\\
		U_{\max}\text{sgn}(\tau_{ei})&\quad, |\tau_{ei}| > U_{\max}
	\end{cases}
\end{equation}
where $u_{i}$, $\tau_{ei}$ denotes the $i$-th component of $\boldsymbol{u}$ and $\boldsymbol{\tau}_{e}$, respectively. $\text{sgn}(\cdot)$ represents the symbolic function.

\subsection{Control Objective} 
The main control objective of this paper can be stated as follows: With the designed intermittent controller, the attitude error system given in equation \ref{sys} and \ref{sysetc} will be ultimately uniformly bounded, such that $\boldsymbol{q}_{ev}$ and $\boldsymbol{\omega}_{e}$ will all converge into a small region near the equilibrium point $\boldsymbol{q}_{ev} = \boldsymbol{0}$, $\boldsymbol{\omega}_{e} = \boldsymbol{0}$.

\section{MAIN DESIGN}
In this section, a composite event-triggered intermittent controller is presented to achieve the desired control objective. The proposed mechanism is comprised of two state-dependent trigger mechanisms that governs the activation and deactivation of the actuator. The stabilization of the system is then achieved through a layered strategy based on the backstepping control philosophy and the proposed trigger mechanism. Specifically, the total attitude system is firstly divided into a cascaded form, and then the composite event-trigger mechanism is applied to the dynamical layer to govern the actuator's behavior.

Following a backstepping control philosophy, we first denote the virtual control law that stabilizes the output layer $\boldsymbol{q}_{ev}$-system as $\boldsymbol{\omega}_{v}\in\mathbb{R}^{3}$. Further, we define a tracking-error subsystem $\boldsymbol{z}_{2}$ as $\boldsymbol{z}_{2} \triangleq \boldsymbol{\omega}_{e} - \boldsymbol{\omega}_{v}$. Accordingly, the proposed composite event-trigger mechanism will be applied to the $\boldsymbol{z}_{2}$-system.

\subsection{Composite Trigger Mechanism Design}

\subsubsection{The Turn-off Trigger Mechanism Design}

When the actuator is turned-on, a turn-off trigger mechanism is designed to judge when to turn off the actuator.
The turn-off trigger mechanism will shut down the actuator when the current sample-and-hold control signal is not able to guarantee the exponential convergence of the $\boldsymbol{z}_{2}$-system. Following such an idea \cite{wang2019event}, the turn-off trigger mechanism is designed as follows:
\begin{equation}\label{turnoff}
	\begin{aligned}
		t^{\text{act}}_{k} &= \inf_{t>t^{\text{on}}_{k}}\left\{\|\boldsymbol{e}_{\tau}\|^{2} > ae^{-\beta t} + b\right\}\\
		t^{\text{pas}}_{k} &= t^{\text{on}}_{k} + T_{\text{max}}\\
		t^{\text{off}}_{k} &= \min\left(t^{\text{act}}_{k},t^{\text{pas}}_{k}\right)
	\end{aligned}
\end{equation}
Here $t^{\text{on}}_{k}$, $t^{\text{off}}_{k}$ represents the arbitrary $k$-th time instant of the actuator-turn-on event and the actuator-turn-off event, respectively, $a, b, \beta > 0$ are positive design parameters, $T_{\max} > 0$ represents a given  maximum allowed opening-time for actuators. 

$t^{\text{act}}_{k}$, $t^{\text{pas}}_{k}$ represent two different time instants that govern the actual turn-off time instant $t^{\text{off}}_{k}$. It can be observed that when $\min\left(t^{\text{act}}_{k} , t^{\text{pas}}_{k}\right) = t^{\text{act}}_{k}$ holds, the actuator will be turned off due the the divergence of $\|\boldsymbol{e}_{\tau}\|^{2}$, while the actuator will be turned off for actuator protection consideration when  $\min\left(t^{\text{act}}_{k} , t^{\text{pas}}_{k}\right) = t^{\text{pas}}_{k}$ holds. With the designed turn-off trigger mechanism, for $t\in\left[t^{\text{on}}_{k},t^{\text{off}}_{k}\right]$, we have $\|\boldsymbol{e}_{\tau}\|^{2}\le ae^{-\beta t}+b$.

\subsubsection{The Turn-on Trigger Mechanism Design}

When the actuator is turned off, a turn-on trigger mechanism is designed to determine the condition for turning on the actuator.
The turn-on trigger mechanism is designed based on the purpose to establish a maximum allowed divergence upper boundary of $\boldsymbol{z}_{2}$-system, such that the temporarily divergence of $\boldsymbol{z}_{2}$-system (i.e., the dynamical layer subsystem) will not break up the exponential convergence of the total system.
The turn-on trigger mechanism is designed as follows:
\begin{equation}\label{turnon}
	t^{\text{on}}_{k+1} = \inf_{t > t^{\text{off}}_{k}}\left\{\|\boldsymbol{z}_{2}\|^{2} > \left(\rho_{0}-\rho_{\infty}\right)e^{-\gamma t} + \rho_{\infty}\right\}
\end{equation} 
where $\rho_{0}, \rho_{\infty}, \gamma > 0$ are positive design parameters that need indicating. Specifically, $\rho_{0} > 0$ stands for an initial value, $\rho_{\infty}$ stands for the value of the terminal asymptote, while $\gamma > 0$ governs the convergence rate. It can be observed that the designed turn-on trigger mechanism ensures that the time-evolution of $\|\boldsymbol{z}_{2}\|^{2}$ will be strictly restricted beneath an exponentially-converged function, of which the residual set is $\rho_{\infty}$. With the designed turn-on trigger mechanism, note that $\|\boldsymbol{z}_{2}\|^{2} \le (\rho_{0}-\rho_{\infty})e^{-\gamma t} + \rho_{\infty}$ holds for $t\in\left[t^{\text{off}}_{k},t^{\text{on}}_{k+1}\right]$.

The internal logic of the composite trigger mechanism is summarized in Figure \ref{LOGIC}. When the actuator is remain turned-on, the turn-off mechanism given in equation (\ref{turnoff}) is used to judge whether the actuator should be turned on or should be shut down. Similarly, the turn-on mechanism given in equation (\ref{turnon}) will be responsible for the judgment, determining whether the actuator should be turned off or should be turned on.
\begin{figure}[hbt!]
	\centering 
	\includegraphics[width=0.5\textwidth]{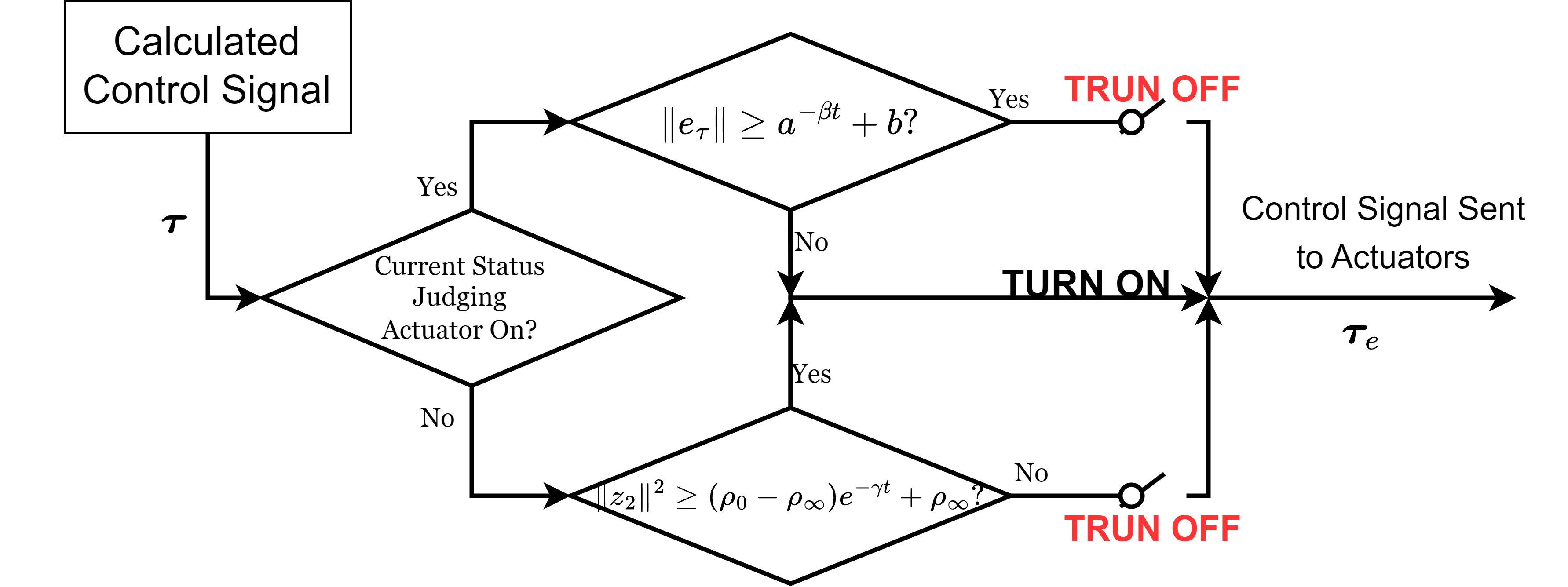}
	\caption{Internal Logic of the Composite Trigger Mechanism}       
	\label{LOGIC}   
\end{figure}

The detailed proof of the Minimum Inter-Event Time (MIET) of the proposed composite trigger mechanism is given in \cite{lei2022event}, and we omit it here for brevity. Briefly speaking, the possibility of Zeno behavior is ruled out from two perspective. 
On the one hand, since these two trigger mechanism is designed based on two variable $\|\boldsymbol{e}_{\tau}\|^{2}$ and $\|\boldsymbol{z}_{2}\|^{2}$, thus it is state-dependent, indicating that they all restricted by the system's dynamic. Owing to the fact that system's dynamic will not tend to be infinity through a properly designed controller, this guarantees the impossibility of the Zeno behavior. On the other hand, the non-zero constant $b$ and $\rho_{\infty}$ sets additional region for the time evolution of $\|\boldsymbol{e}_{\tau}\|^{2}$ and $\|\boldsymbol{z}_{2}\|^{2}$ to increase, hence there's no possibility that the MIET tend to be zero, which rules out the possibility of the Zeno behavior.

\subsection{Control Law Derivation}
Considering the error subsystem $\boldsymbol{q}_{ev}$, $\boldsymbol{z}_{2}$, the virtual control law $\boldsymbol{\omega}_{v}$ that stabilizes the $\boldsymbol{q}_{ev}$-system is given as follows:
\begin{equation}\label{virtual}
	\boldsymbol{\omega}_{v} = -\frac{|q_{e0}|M_{\omega}}{2\sqrt{3}}\boldsymbol{\varGamma}^{-1}_{e}\text{psat}\left(\boldsymbol{q}_{ev}\right)
\end{equation}
where $M_{\omega}>0$ is a design parameter that should be selected regarding to the attitude rotation rate limitation $\Omega_{\max}$, $\text{psat}(\boldsymbol{q}_{ev})\in\mathbb{R}^{3}$ is a column vector calculated by $\text{psat}(q_{evi})$ element-wisely, where $\text{psat}(\cdot)$ is a designed piece-wise function with saturation characteristic, expressed as follows:
\begin{equation}
	\begin{aligned}
		\text{psat}\left(q_{evi}\right)=
		\begin{cases}
			-a_{p}(q_{evi}+1)^{2}-1\quad &q_{evi}\in\left[-1,-P_{b}\right]\\
			K_{m}q_{evi}\quad &q_{evi}\in\left[-P_{b},+P_{b}\right]\\
			a_{p}\left(q_{evi}-1\right)^{2} + 1\quad &q_{evi}\in\left[P_{b},1\right]\\
		\end{cases}
	\end{aligned}
\end{equation}
where $P_{b}\in\left(0,1\right)$ is a parameter that needs indicating, standing for the segment point. According to the smooth and continuous condition, the parameter is then given as $a_{p} = \frac{1}{P^{2}_{b}-1}$, $K_{m} = \frac{-a_{p}(P_{b}-1)^{2}}{P_{b}}$. 
For instance, given $P_{b} = 0.6$, the graph of $\text{psat}(\cdot)$ is illustrated in Figure \ref{fig_p}.
\begin{figure}[hbt!]
	\centering 
	\includegraphics[scale = 0.06]{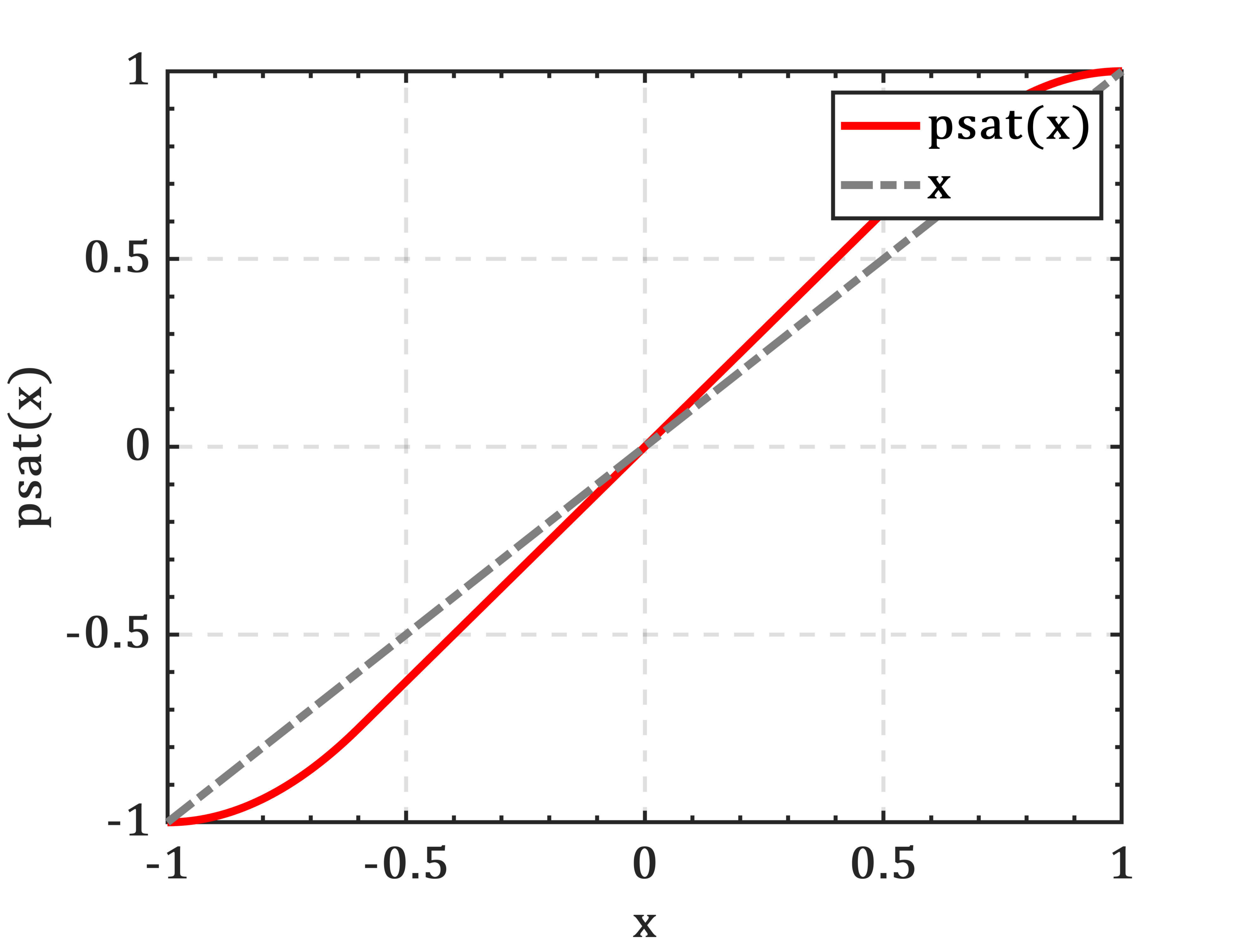}
	\caption{The function graph of $\text{psat}\left(\cdot\right)$ with $P_{b} = 0.6$}    
	\label{fig_p}  
\end{figure}

\begin{remark}
	Considering the norm of the designed $\boldsymbol{\omega}_{v}$ in equation (\ref{virtual}). Owing to the characteristic of the designed saturation function $\text{psat
	}\left(\cdot\right)$, we have $\|\text{psat}(\boldsymbol{q}_{ev})\|\le\sqrt{1+1+1}=\sqrt{3}$. Accordingly, it concludes that $\|\boldsymbol{\omega}_{v}\| \le \frac{|q_{e0}|M_{\omega}}{2\sqrt{3}}\|\boldsymbol{\varGamma}^{-1}_{e}\|\cdot\sqrt{3}$. Meanwhile, note that $\|\boldsymbol{\varGamma}^{-1}_{e}\| = \frac{2}{|q_{e0}|}$ holds, hence we have $\|\boldsymbol{\omega}_{v}\| \le M_{\omega}$. This indicates that the norm of $\boldsymbol{\omega}_{v}$ is strictly restricted by the inequality $\|\boldsymbol{\omega}_{v}\| \le M_{\omega}$. Therefore, by ensuring the actual $\boldsymbol{\omega}_{e}$ tracks the $\boldsymbol{\omega}_{v}$, the attitude rotation rate limitation can be achieved through a layered perspective.
\end{remark}
\begin{remark}
	The function $\text{psat}(\cdot)$ designed in this context acts as a nonlinear controller gain. It is important to note that virtual control laws are typically designed using an explicit controller gain parameter, such as $K_{q}\boldsymbol{q}_{ev}$. However, the use of an explicit parameter $K_{q}$ can violate the norm-bounded characteristic of $\|\boldsymbol{\omega}_{v}\|$. To overcome this limitation and ensure the system's convergence, a nonlinear piece-wise function $\text{psat}(\cdot)$ has been designed. By using this function, the system can exhibit a high-gain governed convergence behavior with limited virtual control law, and the effect of the function's high-gain behavior can be increased by decreasing the value of the parameter $P_{b}$.
\end{remark}
\begin{remark}
	Compared with existing saturated virtual control law method (e.g. \cite{li2016adaptive}), by choosing arbitrary $P_{b}\in\left(0,1\right)$, $|\text{psat}(x)| > x$ will be always holds. This alleviates the conservative of typically-used hyperbolic tangent function, i.e. $\tanh(C\cdot x)$, as $\tanh(Cx) > x$ can only satisfied when $C$ is sufficiently large.
\end{remark}


Subsequently, the actual control law $\boldsymbol{\tau}$ is designed as follows:
\begin{equation}\label{actual}
	\boldsymbol{\tau} = -\boldsymbol{\Omega}_{e}-K_{\omega}\boldsymbol{z}_{2} -\hat{\boldsymbol{d}} + \dot{\boldsymbol{\omega}}_{v} - K_{u}\boldsymbol{\xi} - \boldsymbol{P}_{q}
\end{equation}
where $K_{\omega}$, $K_{u} > 0$ are positive controller gain parameters, and $\boldsymbol{P}_{q}$ is defined as $\boldsymbol{P}_{q} \triangleq \boldsymbol{\varGamma}_{e}\boldsymbol{q}_{ev}$. $\hat{\boldsymbol{d}}$ is a rough compensation for disturbance, given as  $\hat{\boldsymbol{d}} \triangleq D_{m}\text{vec}\left(\tanh\frac{z_{2i}}{\mu_{i}}\right)(i=1,2,3)$, where $\mu_{i} > 0$ are design parameters, $z_{2i}$ represents the $i$-th component of $\boldsymbol{z}_{2}$. $\boldsymbol{\xi}\in\mathbb{R}^{3}$ is a compensation signal generated for the input saturation issue, governed by the following dynamical system \cite{bang2003large}:
\begin{equation}\label{comp}
	\dot{\boldsymbol{\xi}} = -\left[p_{1} + \frac{p_{2}\|\Delta\boldsymbol{\tau}\|^{2}}{\|\boldsymbol{\xi}\|^{2}}\right]\boldsymbol{\xi} + K_{\tau}\tanh\left(\Delta\boldsymbol{\tau}\right)
\end{equation}
where $p_{1}$, $p_{2} > 0$ are positive design parameters. $K_{\tau} > 0$ is the activation gain of the input saturation compensation system.

\section{STABILITY ANALYSIS}

\subsection{Main Theorem}
\begin{theorem}\label{T1}
	For the attitude error system given by equation (\ref{sys})(\ref{sysetc}), under the satisfaction of Assumption \ref{ASSJ}, \ref{ASSD} and \ref{ASSGAMM}, with the controller that given by (\ref{virtual})(\ref{actual})(\ref{comp}) and the composite event-trigger mechanism that given by equation (\ref{turnoff}) and (\ref{turnon}), the closed-loop
	system will be uniformly ultimately bounded (UUB), such that the error system will finally fall into a small residual set near the origin.
	
\end{theorem}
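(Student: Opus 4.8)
The plan is to build a single lumped Lyapunov function for the cascaded loop and then to bound its evolution \emph{separately} on the turn-on and turn-off intervals, finally splicing the two families of local bounds into one decaying envelope. I would take
\[
V \;=\; \tfrac{1}{2}\boldsymbol{q}_{ev}^{\text{T}}\boldsymbol{q}_{ev} \;+\; \tfrac{1}{2}\boldsymbol{z}_{2}^{\text{T}}\boldsymbol{J}\boldsymbol{z}_{2} \;+\; \tfrac{1}{2}\boldsymbol{\xi}^{\text{T}}\boldsymbol{\xi},
\]
aggregating the output layer, the dynamical layer and the anti-windup state. Differentiating along (\ref{sys})--(\ref{sysetc}) and inserting (\ref{virtual}), (\ref{actual}), (\ref{comp}), I would first verify the designed cancellations: the feedforward $-\boldsymbol{\Omega}_{e}+\dot{\boldsymbol{\omega}}_{v}$ removes the drift; the output-layer coupling $\boldsymbol{q}_{ev}^{\text{T}}\boldsymbol{\varGamma}_{e}\boldsymbol{z}_{2}$ is matched by the $-\boldsymbol{P}_{q}$ feedback up to the skew-symmetric remainder $\boldsymbol{q}_{ev}^{\text{T}}\boldsymbol{q}_{ev}^{\times}\boldsymbol{z}_{2}$, which vanishes identically as a scalar triple product; the stabilizing term $-\tfrac{|q_{e0}|M_{\omega}}{2\sqrt{3}}\boldsymbol{q}_{ev}^{\text{T}}\text{psat}(\boldsymbol{q}_{ev})$ is strictly negative by Assumption \ref{ASSGAMM} and the sign property of $\text{psat}$; and the disturbance mismatch $\boldsymbol{z}_{2}^{\text{T}}(\boldsymbol{d}-\hat{\boldsymbol{d}})$ is dominated via $|s|-s\tanh(s/\mu)\le\kappa\mu$, leaving only the bias $\kappa D_{m}\sum_{i}\mu_{i}$. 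The saturation mismatch $\Delta\boldsymbol{\tau}$ is absorbed through $\boldsymbol{\xi}$ by (\ref{comp}).

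On a turn-on interval $[t^{\text{on}}_{k},t^{\text{off}}_{k}]$ the turn-off trigger (\ref{turnoff}) certifies $\|\boldsymbol{e}_{\tau}\|^{2}\le ae^{-\beta t}+b$, so the only indefinite remaining term $-\boldsymbol{z}_{2}^{\text{T}}\boldsymbol{e}_{\tau}$ can be split by Young's inequality into part of $-K_{\omega}\|\boldsymbol{z}_{2}\|^{2}$ plus $\tfrac{1}{2}\|\boldsymbol{e}_{\tau}\|^{2}$. Collecting the negative-definite quadratics and using Assumption \ref{ASSJ} to sandwich $V$ between multiples of $\|\boldsymbol{q}_{ev}\|^{2}+\|\boldsymbol{z}_{2}\|^{2}+\|\boldsymbol{\xi}\|^{2}$, I expect a differential inequality $\dot{V}\le -\alpha_{\text{on}}V + c_{1}e^{-\beta t}+c_{2}$ with $\alpha_{\text{on}}>0$. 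The comparison lemma then places $V$ below an exponential decaying at rate $\alpha_{\text{on}}$ toward a floor fixed by $b$ and $\kappa D_{m}\sum_{i}\mu_{i}$ on the whole window.

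On a turn-off interval $[t^{\text{off}}_{k},t^{\text{on}}_{k+1}]$ the actuator output vanishes and $\boldsymbol{z}_{2}$ is no longer actively driven; here the strategy is not to integrate the open-loop dynamics but to exploit the turn-on trigger (\ref{turnon}) directly, which caps $\|\boldsymbol{z}_{2}\|^{2}\le(\rho_{0}-\rho_{\infty})e^{-\gamma t}+\rho_{\infty}$ pointwise. With $\boldsymbol{z}_{2}$ thus bounded, the output layer still supplies its negative term while the coupling is controlled by the trigger envelope, so $V$ admits a local bound of the form $V(t)\le V(t^{\text{off}}_{k})+(\text{exponential}+\text{const})$ over the window. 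The decisive step is to splice the two bounds across all $k$: matching endpoint values at each switch and invoking the elementary envelope property that a value lying below a shifted ``exponential-plus-constant'' at the left endpoint stays below it on the entire interval, I would assemble one continuous, eventually-nonincreasing global bound $\bar{V}(t)\le \bar{V}_{0}e^{-\lambda t}+R$ valid for all $t\ge 0$.

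Because $V$ upper-bounds $\|\boldsymbol{q}_{ev}\|^{2}$, $\|\boldsymbol{z}_{2}\|^{2}$ and $\|\boldsymbol{\xi}\|^{2}$ through Assumption \ref{ASSJ}, the convergence $\bar{V}(t)\to R$ forces $\boldsymbol{q}_{ev}$ and $\boldsymbol{\omega}_{e}=\boldsymbol{z}_{2}+\boldsymbol{\omega}_{v}$ into a ball whose radius is an explicit function of $b$, $\rho_{\infty}$, $\kappa D_{m}\sum_{i}\mu_{i}$ and the gains, which is exactly the claimed UUB property with a parameter-tunable residual set. I anticipate the main obstacle to be the splicing of the third paragraph: the Lyapunov value is permitted to \emph{grow} during each turn-off phase, and one must show that the admissible growth allowed by the turn-on threshold is always re-absorbed by the guaranteed decay of the following turn-on phase, so that the concatenated envelope does not ratchet upward over the infinitely many cycles. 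Making the two design envelopes compatible---i.e. choosing $\rho_{0},\rho_{\infty},\gamma$ and $a,b,\beta,T_{\max}$ so the net per-cycle change of $\bar{V}$ is nonpositive outside the residual set---is the heart of the argument, and it is here that the minimum-inter-event-time property (ruling out Zeno behavior, as noted after (\ref{turnon})) is needed to keep the number of growth phases per unit time finite.
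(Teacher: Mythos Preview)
Your architecture matches the paper's: the same lumped $V=V_{1}+V_{2}+V_{3}$, the same differential inequality $\dot V\le -C_{t}V+\tfrac{1}{2b_{1}}(ae^{-\beta t}+b)+D_{0}$ on turn-on windows via Young's inequality and the $\text{psat}$/$\tanh$ bounds, and the same use of the trigger cap $\|\boldsymbol z_{2}\|^{2}\le(\rho_{0}-\rho_{\infty})e^{-\gamma t}+\rho_{\infty}$ during turn-off. The place where you diverge from the paper is the turn-off analysis and the splicing, and there you are making the problem harder than it is.

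On a turn-off window the paper does \emph{not} settle for a ``$V(t)\le V(t^{\text{off}}_{k})+{}$growth'' bound. It splits $V=V_{n}+V_{2}$ with $V_{n}=V_{1}+V_{3}$, feeds the trigger envelope on $\|\boldsymbol z_{2}\|^{2}$ into the $\boldsymbol q_{ev}$--$\boldsymbol z_{2}$ cross-term to obtain $\dot V_{n}\le -C_{d}V_{n}+\tfrac{m}{2b_{2}}e^{-\gamma t}+\tfrac{n}{2b_{2}}$, and bounds $V_{2}\le\mathcal J_{2}(t)=\tfrac{\lambda_{J\max}}{2}\bigl[(\rho_{0}-\rho_{\infty})e^{-\gamma t}+\rho_{\infty}\bigr]$ directly. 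Thus the turn-off bound on $V$ is \emph{again} a decaying exponential plus constant, of exactly the same qualitative form as the turn-on bound. Consequently the splicing step does not require any ``net per-cycle nonpositivity'' compatibility condition between $(\rho_{0},\rho_{\infty},\gamma)$ and $(a,b,\beta,T_{\max})$, nor does it use the MIET (that is invoked only to exclude Zeno behavior). The paper simply observes that every local bound has the form $(A(t_{\mathrm{trig}})-A_{\infty})e^{-L(t-t_{\mathrm{trig}})}+A_{\infty}$, which is a segment of the globally defined curve $(A(0)-A_{\infty})e^{-Lt}+A_{\infty}$; extending each segment to all of $[0,\infty)$ and taking the pointwise maximum yields a continuous, eventually-nonincreasing envelope $\mathcal U_{V}(t)$ with the explicit residual $\max\bigl(\tfrac{N_{k}}{C_{t}},\,\tfrac{N_{f}}{C_{d}}+\tfrac{\lambda_{J\max}\rho_{\infty}}{2}\bigr)$. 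Your anticipated ratcheting obstacle is therefore bypassed rather than confronted.
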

\begin{proof}
	The proof of Theorem \ref{T1} is given in Appendix \ref{STABLEPROOF}.
\end{proof}
The parameter selection suggestion is provided along with the stability analysis, given in Appendix \ref{PARA}.

\subsection{Further Discussion On the System's Behavior}\label{DISCUSS}

It is important to note that the turn-off trigger mechanism has been designed with a strict convergence guarantee for the $\boldsymbol{z}_{2}$ system in the time interval $\left[t^{\text{on}}_{k}, t^{\text{off}}_{k}\right]$. Consequently, the $\boldsymbol{z}_{2}$ system will decrease rapidly during this period, leading to an additional margin area between $V_{2}(t)$ and the upper boundary $\mathcal{J}_{2}(t)$. When the controller is turned off, $V_{2}(t)$ may diverge and eventually intersect with $\mathcal{J}_{2}(t)$, triggering the turn-on condition. This process occurs in a circular manner, and it can be deduced that the time response of $V_{2}(t)$ will remain below the $\mathcal{J}_{2}(t)$ determined by design parameters, resulting in a wavy curve-like time response of $V_{2}(t)$. This analysis will be further validated through simulation results in Section \ref{simulation}.

\section{SIMULATION AND ANALYSIS}\label{simulation}
This section presents several attitude control simulation results to validate the effectiveness of the proposed control scheme. 
Firstly, based on an established attitude tracking scenario, a normal case simulation is firstly carried out to show the fundamental ability of the proposed controller. Further, a comparison simulation is performed to provide specific analysis on the effect between the proposed intermittent controller and the typical periodic controller.

\subsection{Scenario Establishment}
In this section, the spacecraft is assumed to be a rigid-body spacecraft, of which the inertia matrix is assumed to be $\boldsymbol{J} = \left[2.8,0.002,0.0076;0.002,2.6,0.01;0.0076,0.01,1.9\right]kg\cdot m^{2}$. The maximum output torque is set to be $0.05N\cdot m$, while the allowed maximum continuously-output duration time $T_{\max} = 10s$.
Further, we utilize the mostly-considered periodically-varying external disturbance model, expressed as follows:
\begin{equation}
	\boldsymbol{d} = 
	\begin{bmatrix}
		1e-4 \cdot \left[4\sin\left(3\omega_{\text{dis}}t\right) + 2\cos\left(10\omega_{\text{dis}}t\right) -2\right]\\ 	
		1e-4\left[-1.5\sin\left(2\omega_{\text{dis}}t\right) + 3\cos\left(5\omega_{\text{dis}}t\right) +2\right]\\ 	
		1e-4\left[3\sin\left(10\omega_{\text{dis}}t\right) - 8\cos\left(4\omega_{\text{dis}}t\right) +2\right]\\ 	
	\end{bmatrix}
\end{equation}
where $\omega_{\text{dis}}$ represents the angular frequency of the disturbance, set to be $\omega_{\text{dis}} = 0.01\text{rad}/s$. Notably, such a disturbance is much bigger than the actual one in space environment, thus it is enough for the robustness evaluation.
Subsequently, the attitude rotation rate limitation of the spacecraft is required to not exceed $3^{\circ}/s$, indicating that $\|\boldsymbol{\omega}_{s}\| < 0.0524\text{rad}/s$ should be satisfied.

The simulation is performed at $10\text{Hz}$ and the duration time is $150s$. We set the judgment period for the trigger mechanism to be $\textbf{0.1s}$, while the control period is set to be $\textbf{1s}$. This indicates that the controller will be \textbf{only} opened on integer seconds, which is a common working period for actual spacecraft control systems.

\subsection{Normal Case Simulation: An Attitude Tracking Task}\label{NORMALCASE}
Firstly, we consider an attitude tracking control task. The initial condition of the spacecraft is given as $\boldsymbol{q}_{s}(0) = \left[0.4367,0.4927,0.5035,0.5595\right]^{\text{T}}$, $\boldsymbol{\omega}_{s} = \boldsymbol{0}$. 
The target attitude is given by the initial condition and the desired angular velocity, expressed as follows:
\begin{equation}
	\boldsymbol{q}_{d}(0) = [0,0,0,1]^{\text{T}};\boldsymbol{\omega}_{d} = 0.3\left[\cos\frac{T}{80},\sin\frac{T}{100},-\cos\frac{T}{100}\right]^{\text{T}}
\end{equation}
The unit is $^{\circ}/s$.
Notably, since $\boldsymbol{\omega}_{s} = \boldsymbol{\omega}_{e} + \boldsymbol{C}_{e}\boldsymbol{\omega}_{d}$ holds, thus $\|\boldsymbol{\omega}_{s}\|\le\|\boldsymbol{\omega}_{e}\|+\|\boldsymbol{\omega}_{d}\|$ holds.
Owing to the fact that $\|\boldsymbol{\omega}_{d}\|\neq 0$, we conservatively set $M_{\omega}$ as $M_{\omega} = 0.0175$.  
Other main parameters are given as $K_{\omega} = 5\cdot J_{i}(i = 1,2,3)$, $a = 0.1$, $b = 1e-4$, $\beta = 0.2$, $\rho_{0} = 10^{-3}$, $\rho_{\infty} = 10^{-5}$, $\gamma = 0.1$. Here $J_{i}$ denotes $\boldsymbol{J}$'s $i$ th  diagonal element.

The simulation result is illustrated in Figure \ref{fig_normaQE},\ref{fig_normaW},\ref{fig_normaU}. Figure \ref{fig_normaQE} shows the time evolution of $\boldsymbol{q}_{ev}$, while the time evolution of $\boldsymbol{\omega}_{s}$ is provided in Figure \ref{fig_normaW}. The actual exerted actuator output $\boldsymbol{u}$ is illustrated in Figure \ref{fig_normaU}. The time-responding of $V_{2}(t) = \frac{1}{2}\boldsymbol{z}^{\text{T}}_{2}\boldsymbol{J}\boldsymbol{z}_{2}$ and $\mathcal{J}_{2}(t)$ is further illustrated in Figure \ref{fig_normaS}.
\begin{figure}[hbt!]
	\centering 
	\includegraphics[width=0.5\textwidth]{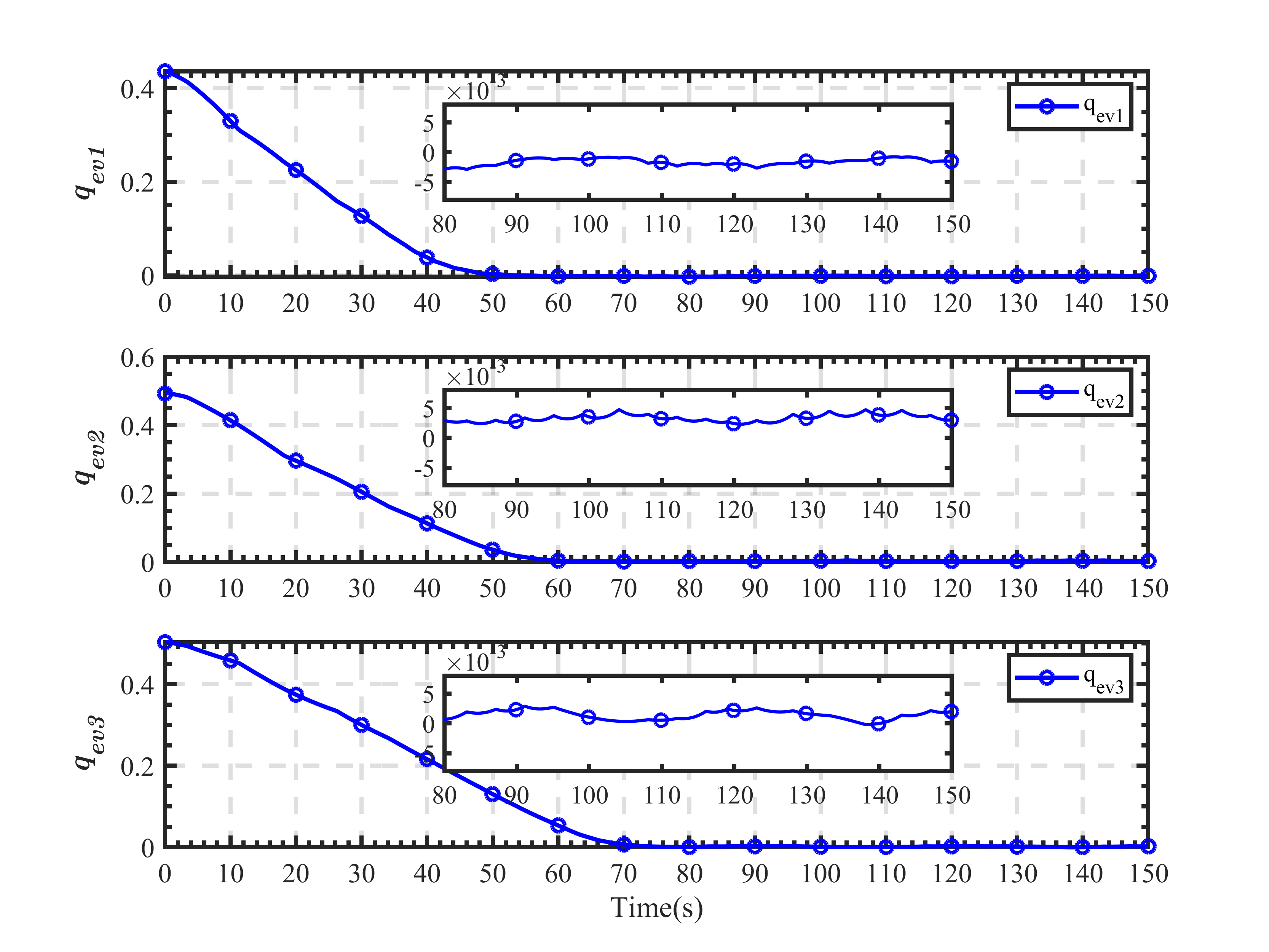}
	\caption{Time Evolution of the $i$-th component of $\boldsymbol{q}_{ev}\left(i=1,2,3\right)$ (Normal Case Simulation)}    
	\label{fig_normaQE}  
	\centering 
	\includegraphics[width=0.5\textwidth]{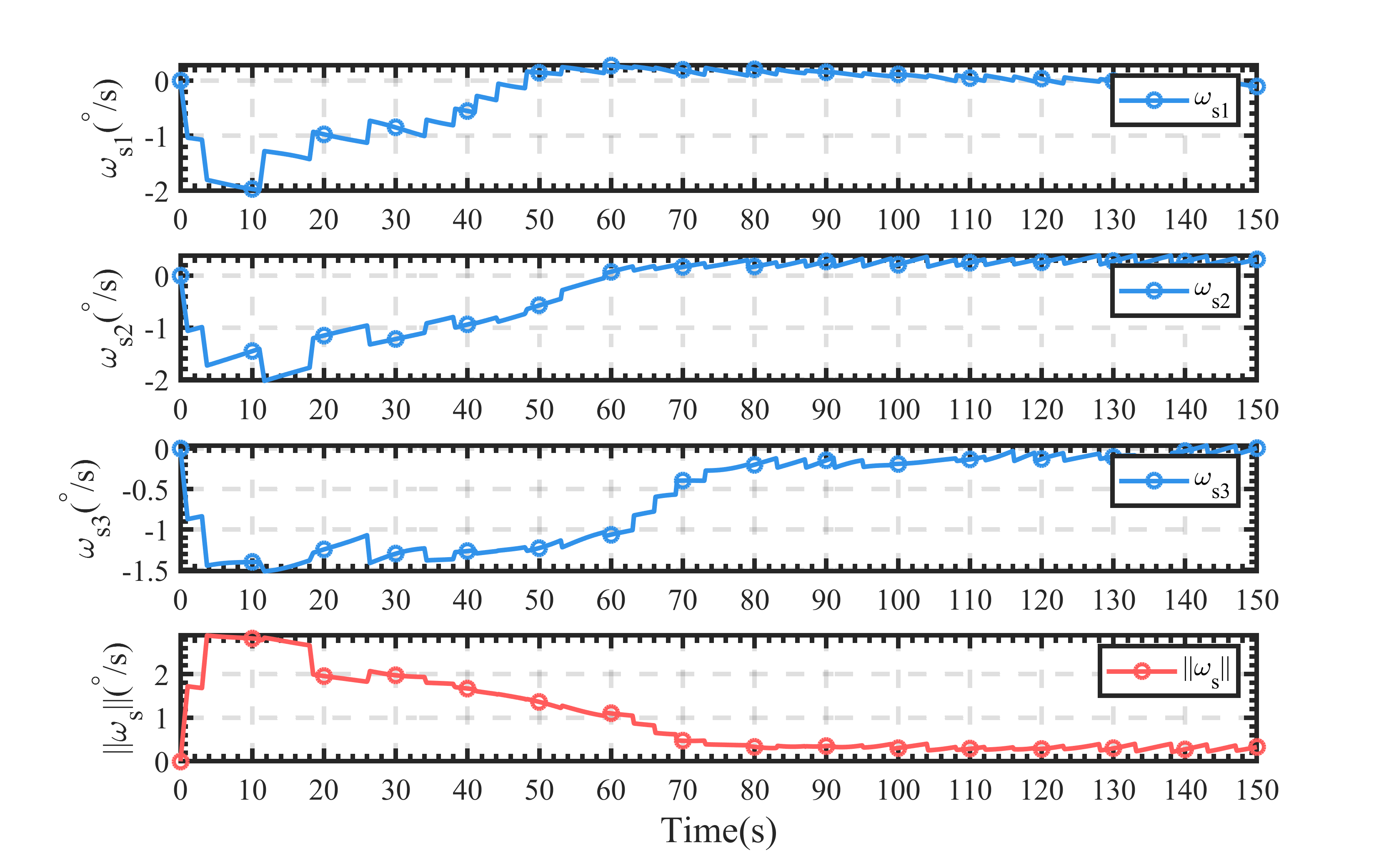}	\caption{Time Evolution of the $i$-th component of $\boldsymbol{\omega}_{s} \left(i=1,2,3\right)$ (Normal Case Simulation)}    
	\label{fig_normaW} 
	\centering 
	\includegraphics[width=0.5\textwidth]{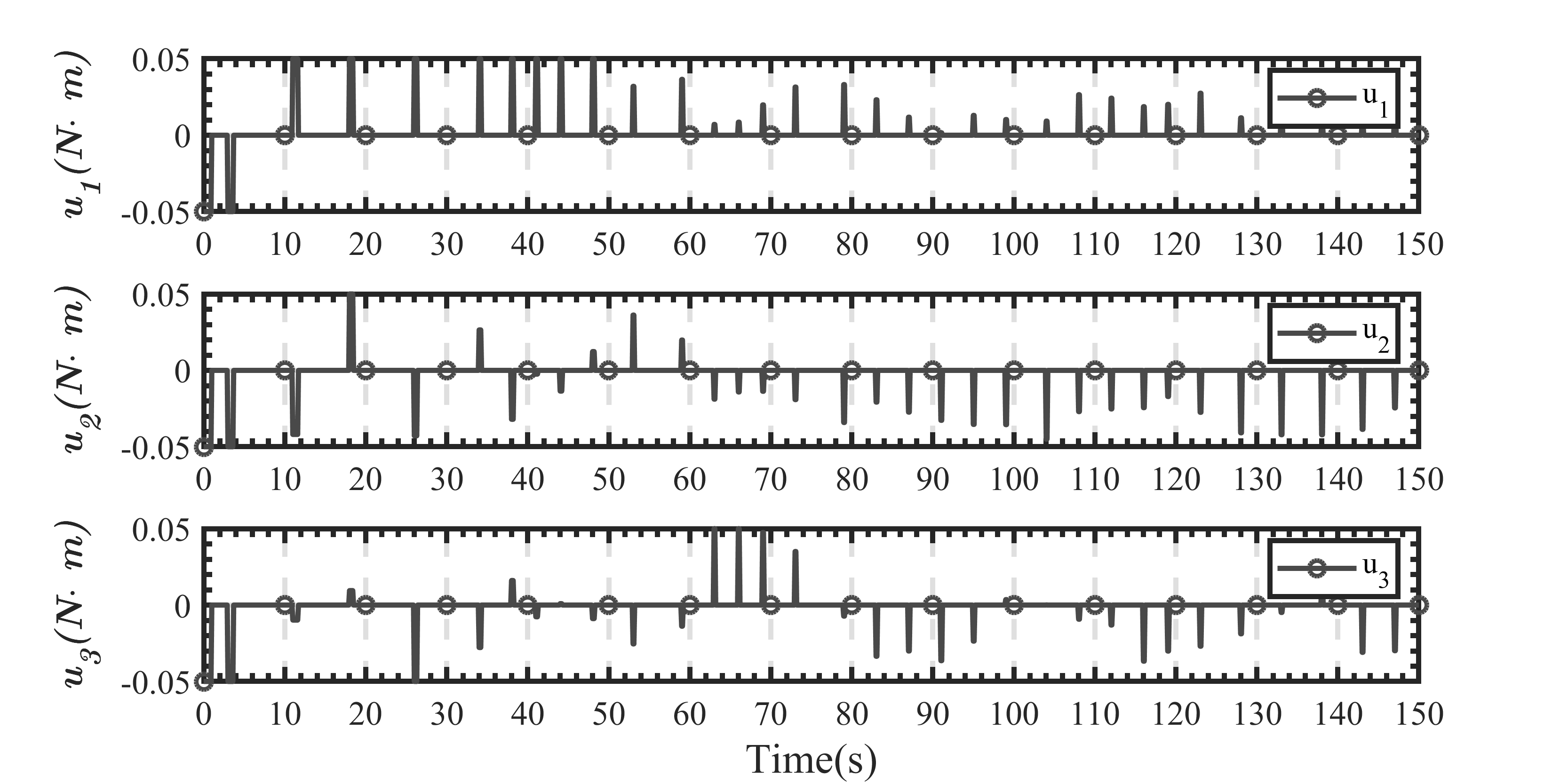}	\caption{Time Evolution of the $i$-th component of $\boldsymbol{u} \left(i=1,2,3\right)$ (Normal Case Simulation)}    
	\label{fig_normaU} 
\end{figure}
\begin{figure}[hbt!]
	\centering 
	\includegraphics[width=0.5\textwidth]{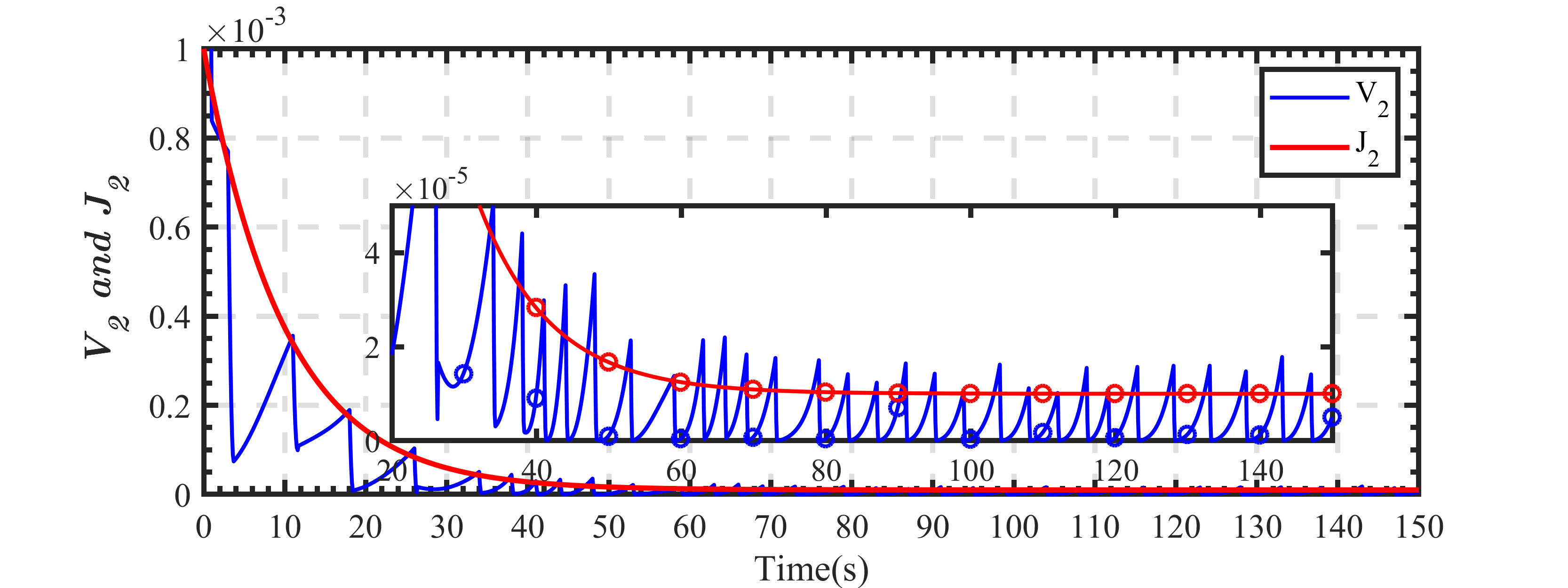}	\caption{Time Evolution of the $V_{2}(t)$ and $\mathcal{J}_{2}(t)$ (Normal Case Simulation, controller can only be opened in integer seconds)}    
	\label{fig_normaS} 
	\centering 
	\includegraphics[width=0.5\textwidth]{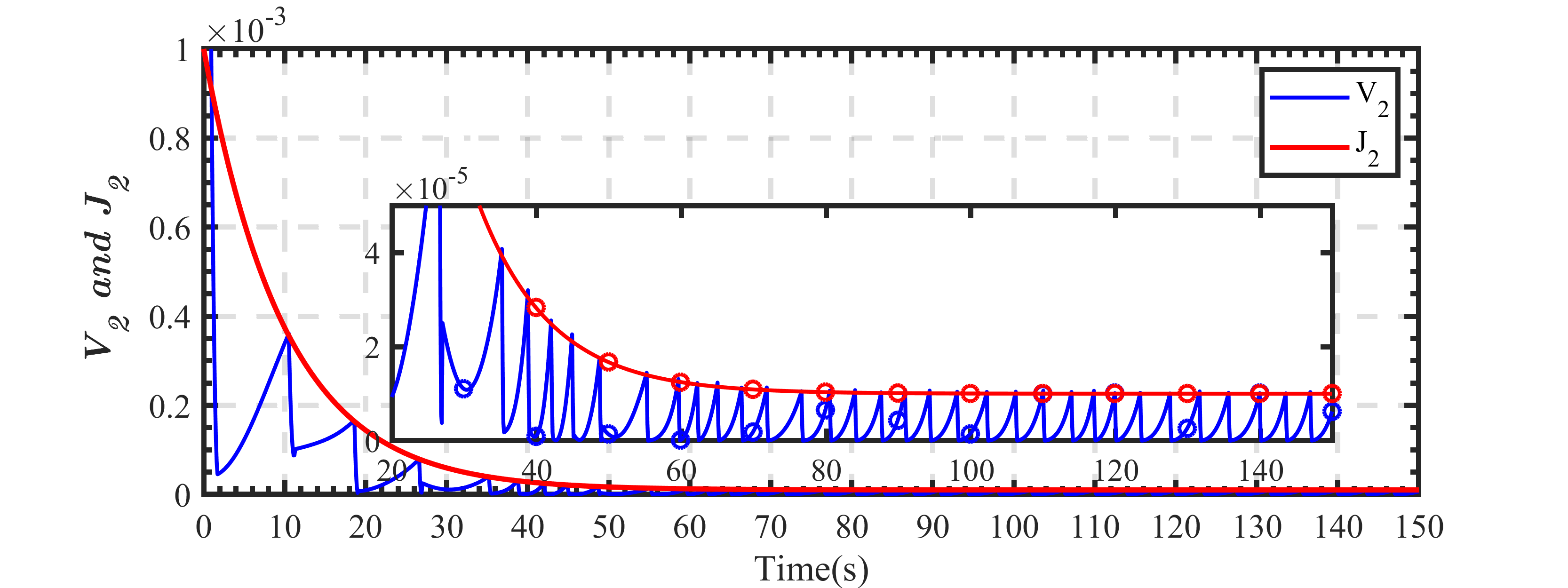}	\caption{Time Evolution of the $V_{2}(t)$ and $\mathcal{J}_{2}(t)$ (Additional Comparison, controller can be opened on every trigger judgment event)}    
	\label{fig_normaSS} 
\end{figure}

From Figure \ref{fig_normaU}, it can be observed that the actual actuator output shows an intermittent behavior. Each component of $\boldsymbol{q}_{ev}$ is able to converge to the steady state, with an accuracy about to be $0.5^{\circ}$. The spacecraft's angular velocity $\boldsymbol{\omega}_{s}$ is also able to track the desired $\boldsymbol{\omega}_{d}$. 

Note from Figure \ref{fig_normaS}, it can be discovered that the time evolution of $V_{2}(t)$ is actually acts like a wavy curve, which validates our proposition in Subsection \ref{DISCUSS}. The time evolution of $V_{2}$ seems not strictly satisfy the turn-on trigger mechanism. This is because we set actuators can only be opened in integer seconds, hence the actuator cannot be turned on at arbitrary time instant at which the condition is triggered. As an additionally elaboration, suppose that we allow the actuator to be turned on at any trigger moment, the corresponding time evolution of $V_{2}$ and $\mathcal{J}_{2}$ is illustrated in Figure \ref{fig_normaSS}.

\subsection{Comparison with Periodic Controller}
Further, we employ a "periodic" version of the proposed controller to perform a comparison simulation, of which the control frequency is set to be $1Hz$. The simulation scenario is the same as the one in Subsection \ref{NORMALCASE}, and the main control parameter of the periodic controller is regulated again to achieve a better performance, given as $K_{\omega} = 1.5$.
\begin{figure}[hbt!]
	\centering 
	\includegraphics[width=0.5\textwidth]{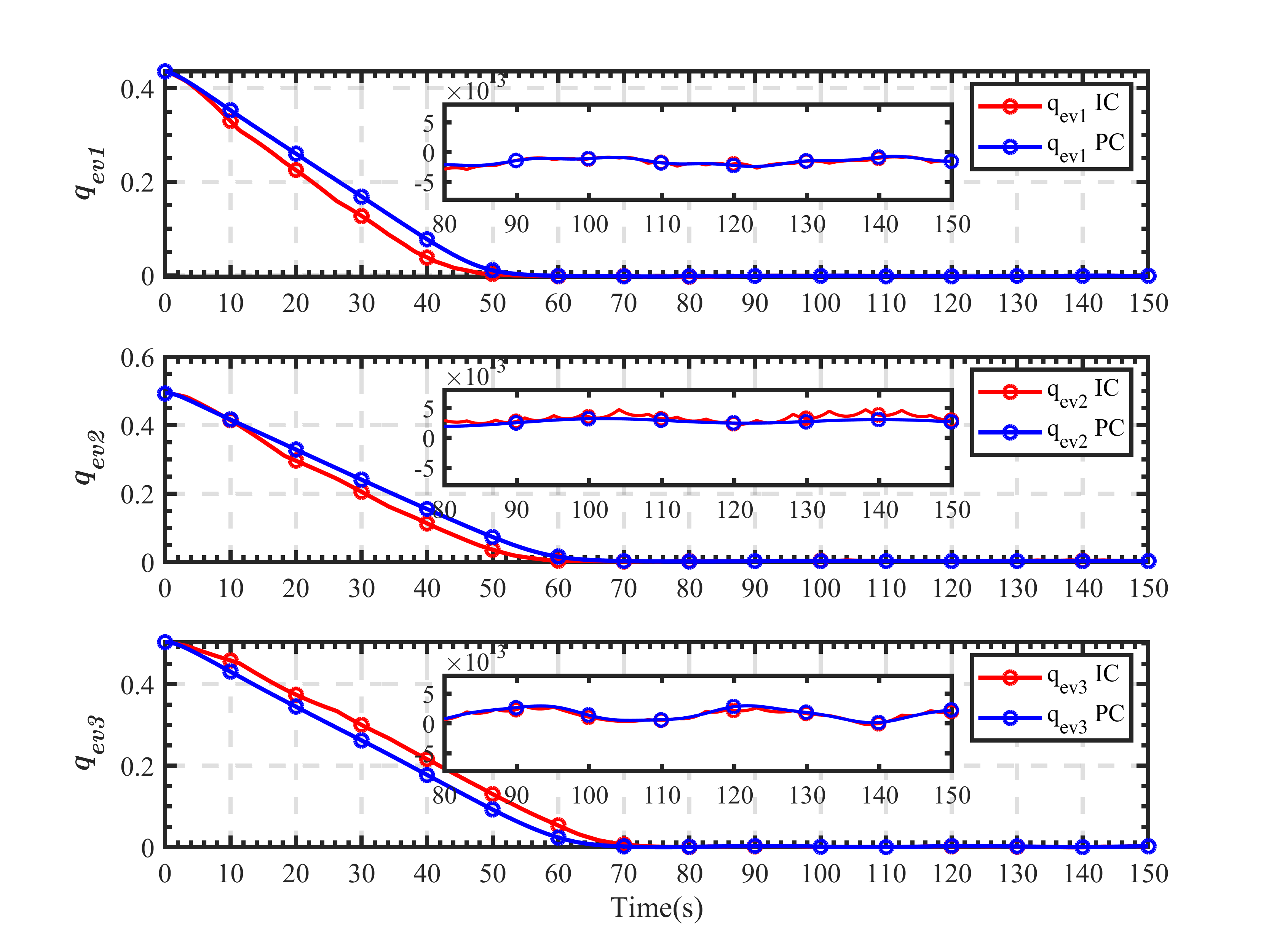}	\caption{Time Evolution of the $\boldsymbol{q}_{ev}(t)$ (Comparison Result of the Proposed Intermittent Controller (IC) and the benchmark Periodic Controller (PC) with $1Hz$ acting frequency)} 
	\label{fig_normaTC} 
\end{figure}
\begin{figure}[hbt!]
	\centering 
	\includegraphics[width=0.5\textwidth]{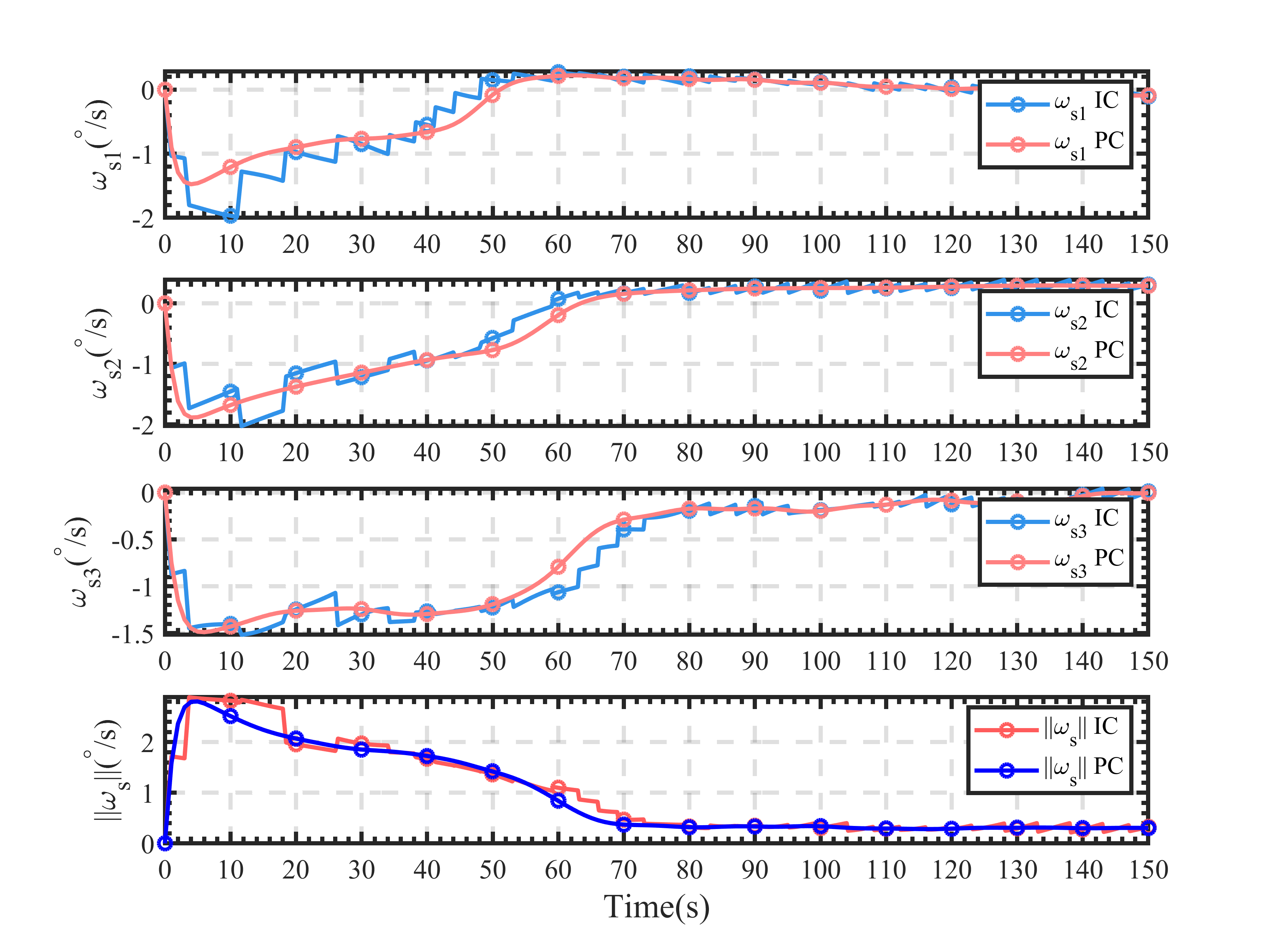}	\caption{Time Evolution of the $\boldsymbol{\omega}_{s}(t)$ (Comparison Result of the Proposed Intermittent Controller (IC) and the benchmark Periodic Controller (PC) with $1Hz$ acting frequency)} 
	\label{fig_WTC} 
\end{figure}
The comparison simulation result is illustrated in Figure \ref{fig_normaTC} and \ref{fig_WTC}, showing the time evolution of $\boldsymbol{q}_{ev}$ and $\boldsymbol{\omega}_{s}$ with the Intermittent Controller (proposed, IC) and the benchmark Periodic Controller (PC), respectively.

From the simulation that presented above, it can be inferred that the periodic controller does not exhibit a significantly higher level of control accuracy in comparison to the results obtained in Subsection \ref{NORMALCASE}, despite having a higher control frequency.
The total number of actuator actions performed by periodic controller is 150, whereas the intermittent controller proposed in this study exhibits only 32 instances of actuator action. This observation indicates that the proposed scheme is capable of achieving almost a same control accuracy with significantly fewer actuator actions.

\section{CONCLUSION}
This paper proposes a framework for achieving desired attitude control with intermittent actuator activation, while also addressing limitations in attitude rotation rate and input saturation. The proposed framework is based on a composite event-trigger mechanism, which consists of two state-dependent triggers responsible for activating and deactivating the actuators. By leveraging the backstepping philosophy and the composite trigger mechanism, an intermittent control framework is presented that stabilizes the system through a layered strategy. The proposed controller achieves a control accuracy of $0.5^{\circ}$ with a significantly-reduced actuator activation, and such an accuracy is comparable to that of a conventional periodic controller. This is particularly important since the steady-state maintenance phase is a critical part of a spacecraft's on-orbit scenario. The proposed method offers a potential way to achieve steady-state maintenance with much lower control frequency, making it a potential approach for practical applications.

For the further investigation, we may focus how to integrate such a framework with typical continuous or discrete implemented controller frameworks, and further building a hybrid control strategy that is able to switch automatically regarding the task's specific necessary.

\section{APPENDIX: Proof of Theorem 1}\label{STABLEPROOF}
The proof of Theorem \ref{T1} is divided into three steps, detailed in Subsection \ref{SECTURNON}, \ref{SECTURNOFF} and \ref{SECWHOLETIME}, respectively.
\subsection{Proof of Local Boundedness of $V(t)$ When Actuators are Turned On}\label{SECTURNON}

Firstly, we discuss on the closed-loop system' s responding behavior when actuators are turned on. Accordingly, we consider the time interval from the $k$-th turn-on trigger event to the $k$-th turn-off trigger event, i.e., $t\in\left[t^{\text{on}}_{k} , t^{\text{off}}_{k}\right]$.

Choosing a candidate Lyapunov function for the $\boldsymbol{q}_{ev}$-system as $V_{1} = \frac{1}{2}\boldsymbol{q}^{\text{T}}_{ev}\boldsymbol{q}_{ev}$. Taking the time-derivative of $V_{1}$ and combining with the attitude error system model given in equation (\ref{sys}), it can be yielded that:
\begin{equation}\label{dV10}
	\dot{V}_{1} = \boldsymbol{q}^{\text{T}}_{ev}\boldsymbol{\varGamma}_{e}\left(\boldsymbol{z}_{2}+\boldsymbol{\omega}_{v}\right) 
\end{equation}
Let $B_{\omega} \triangleq \frac{M_{\omega}}{\sqrt{3}}$ for brevity. Substituting the virtual control law $\boldsymbol{\omega}_{v}$ given in equation (\ref{virtual}) into equation (\ref{dV10}), it can be yielded that:
\begin{equation}\label{dV1}
	\begin{aligned}
		\dot{V}_{1} &= -\frac{|q_{e0}|M_{\omega}}{2\sqrt{3}}\boldsymbol{q}^{\text{T}}_{ev}\text{psat}(\boldsymbol{q}_{ev}) + \boldsymbol{q}^{\text{T}}_{ev}\boldsymbol{\varGamma}_{e}\boldsymbol{z}_{2}\\
		&\le -\frac{|q_{e0}|B_{\omega}}{2}\boldsymbol{q}^{\text{T}}_{ev}\boldsymbol{q}_{ev} + \boldsymbol{q}^{\text{T}}_{ev}\boldsymbol{\varGamma}_{e}\boldsymbol{z}_{2}\\ 
	\end{aligned}
\end{equation}

Subsequently, considering a candidate Lyapunov Function for the $\boldsymbol{z}_{2}$-system as $V_{2} = \frac{1}{2}\boldsymbol{z}^{\text{T}}_{2}\boldsymbol{J}\boldsymbol{z}_{2}$. Taking the time-derivative of $V_{2}$ and combining with the actual control law $\boldsymbol{\tau}$ given in equation (\ref{actual}), it can be obtained that:
\begin{equation}\label{dV21}
	\begin{aligned}
		\dot{V}_{2} &= \boldsymbol{z}^{\text{T}}_{2}\left[-K_{\omega}\boldsymbol{z}_{2}-K_{u}\boldsymbol{\xi}+\tilde{\boldsymbol{d}}-\boldsymbol{P}_{q} - \boldsymbol{e}_{\tau} - \Delta\boldsymbol{\tau}\right]
	\end{aligned}
\end{equation}
where $\tilde{\boldsymbol{d}}$ is defined as $\tilde{\boldsymbol{d}} \triangleq \boldsymbol{d} - D_{m}\text{vec}\left(\tanh\frac{z_{2i}}{\mu_{i}}\right)$. From equation (\ref{dV21}), it can be further yielded that:
\begin{equation}\label{dV2}
	\begin{aligned}
		\dot{V}_{2}
		&\le -K_{\omega}\|\boldsymbol{z}_{2}\|^{2}+\boldsymbol{z}^{\text{T}}_{2}\tilde{\boldsymbol{d}}+K_{u}\|\boldsymbol{z}_{2}\|\|\boldsymbol{\xi}\|+\|\boldsymbol{z}_{2}\|\|\boldsymbol{e}_{\tau}\|\\
		&\quad +\|\boldsymbol{z}_{2}\|\|\Delta\boldsymbol{\tau}\| - \boldsymbol{z}^{\text{T}}_{2}\boldsymbol{P}_{q}
	\end{aligned}
\end{equation}
Applying the Peter-Paul's inequality and the Young's inequality, it can be further derived that:
\begin{equation}\begin{aligned}\label{dV22}
		\dot{V}_{2} &\le -\left(K_{\omega}-\frac{b_{1}+K_{u}+1}{2}\right)\|\boldsymbol{z}_{2}\|^{2}+
		\frac{K_{u}}{2}\|\boldsymbol{\xi}\|^{2}\\
		&\quad +\frac{1}{2b_{1}}\|\boldsymbol{e}_{\tau}\|^{2}+\frac{1}{2}\|\Delta\boldsymbol{\tau}\|^{2}+\boldsymbol{z}^{\text{T}}_{2}\tilde{\boldsymbol{d}} - \boldsymbol{z}^{\text{T}}_{2}\boldsymbol{P}_{q}
	\end{aligned}
\end{equation}
where $b_{1} > 0$ is a positive coefficient.

Further, considering a candidate Lyapunov function for the $\boldsymbol{\xi}$-system as $V_{3} = \frac{1}{2}\boldsymbol{\xi}^{\text{T}}\boldsymbol{\xi}$. Taking the time-derivative of $V_{3}$, one can be obtained that:
\begin{equation}\begin{aligned}\label{dV3}
		\dot{V}_{3} &= -\left[p_{1} + \frac{p_{2}\|\Delta\boldsymbol{\tau}\|^{2}}{\|\boldsymbol{\xi}\|^{2}}\right]\boldsymbol{\xi}^{\text{T}}\boldsymbol{\xi} + \boldsymbol{\xi}^{\text{T}}K_{\tau}\tanh\left(\Delta\boldsymbol{\tau}\right)\\
		&\le -p_{1}\|\boldsymbol{\xi}\|^{2}-p_{2}\|\Delta\boldsymbol{\tau}\|^{2}+\frac{K_{\tau}}{2}\left[\|\boldsymbol{\xi}\|^{2}+\|\Delta\boldsymbol{\tau}\|^{2}\right]\\
		& = -\left(p_{1}-\frac{K_{\tau}}{2}\right)\|\boldsymbol{\xi}\|^{2} - \left(p_{2}-\frac{K_{\tau}}{2}\right)\|\Delta\boldsymbol{\tau}\|^{2}
	\end{aligned}
\end{equation}
Combining the result in equation (\ref{dV22}) and (\ref{dV3}), it can be further yielded that:
\begin{equation}\begin{aligned}\label{d23}
		\dot{V}_{2}+\dot{V}_{3}
		&\le-\left(K_{\omega
		}-\frac{b_{1}+K_{u}+1}{2}\right)\|\boldsymbol{z}_{2}\|^{2}\\
		&\quad -\left(p_{1}-\frac{K_{u}+K_{\tau}}{2}\right)\|\boldsymbol{\xi}\|^{2}\\
		&\quad -\left(p_{2}-\frac{K_{\tau}+1}{2}\right)\|\Delta\boldsymbol{\tau}\|^{2}+\frac{1}{2b_{1}}\|\boldsymbol{e}_{\tau}\|^{2}\\
		&\quad +\boldsymbol{z}_{2}^{\text{T}}\tilde{\boldsymbol{d}} - \boldsymbol{z}^{\text{T}}_{2}\boldsymbol{P}_{q}
	\end{aligned}
\end{equation}
Defining $C_{1}$, $C_{2}$ and $C_{3}$ as $C_{1} \triangleq K_{\omega}-\frac{b_{1}+K_{u}+1}{2}$, $C_{2} \triangleq p_{1}-\frac{K_{u}+K_{\tau}}{2}$ and $C_{3} \triangleq p_{2}-\frac{K_{\tau}+1}{2}$ for the writing brevity.
Note that we have $\frac{1}{2}\lambda_{J\max}\|\boldsymbol{z}_{2}\|^{2} \ge V_{2}$, thus  $-C_{1}\|\boldsymbol{z}_{2}\|^{2} \le -\frac{2C_{1}V_{2}}{\lambda_{J\max}}$ will be satisfied. By choosing parameters and ensures that $C_{1}, C_{2}, C_{3} > 0$ holds, it can be further derived that:
\begin{equation}\begin{aligned}\label{dVV2}
		\dot{V}_{2} + \dot{V}_{3}
		&\le -\frac{2C_{1}}{\lambda_{J\max}}V_{2} - 2C_{2}V_{3}+\frac{1}{2b_{1}}\|\boldsymbol{e}_{\tau}\|^{2} + \boldsymbol{z}^{\text{T}}_{2}\tilde{\boldsymbol{d}}-\boldsymbol{z}^{\text{T}}_{2}\boldsymbol{P}_{q}
	\end{aligned}
\end{equation}
Considering the term expressed as $\boldsymbol{z}^{\text{T}}_{2}\tilde{\boldsymbol{d}}$. By utilizing the Lemma 1 given in \cite{lei2023singularity}, we have $\boldsymbol{z}^{\text{T}}_{2}\tilde{\boldsymbol{d}} \le \sum_{i=1}^{3}D_{m}\left[|z_{2i}| - \tanh\frac{z_{2i}}{\mu_{i}}\right] \le 0.2785\sum_{i=1}^{3}\mu_{i}$, and we further define $D_{0} \triangleq 0.2785\sum_{i=1}^{3}\mu_{i}$ for convenient.

Overall, choosing a lumped candidate Lyapunov function as $V = V_{1}+V_{2}+V_{3}$. Taking the time-derivative of $V$ and combining with the result stated in equation (\ref{dV1}) and (\ref{dVV2}), it can be obtained that:
\begin{equation}\label{dvv}
	\begin{aligned}
		\dot{V} \le -C_{t}V + \frac{1}{2b_{1}}\|\boldsymbol{e}_{\tau}\|^{2}+D_{0}
	\end{aligned}
\end{equation}
where $C_{t}$ is a \textbf{constant}, defined as $C_{t} \triangleq \min\left(\min(|q_{e0}|)B_{\omega},\frac{2C_{1}}{\lambda_{J\max}},2C_{2}\right)$.
\begin{remark}\label{Ctremark}
	Notably, since the error system, i.e., $\boldsymbol{q}_{ev}$, will converge with properly designed controller, thus $|q_{e0}(t)|$ will increasing accordingly, hence there always exists a positive minima of $|q_{e0}(t)|$ during the whole control process. Practically, for the specific evaluation of $C_{t}$, it can be approximated by the initial condition, i.e., $\min\left(|q_{e0}(t)|\right) \approx |q_{e0}(0)|$. On the other hand, there always exists a sufficient small positive constant $\delta$ such that $\delta \le \min\left(|q_{e0}(t)|\right)$ holds, which can be used for the evaluation of $C_{t}$.
\end{remark}

Owing to the designed turn-off trigger mechanism given in equation (\ref{turnoff}), $\|\boldsymbol{e}_{\tau}\|^{2} \le ae^{-\beta t} + b$ always holds when actuators are turned on. Substituting this relationship into equation (\ref{dvv}) yields:
\begin{equation}\label{dV}
	\dot{V} \le -C_{t}V + \frac{1}{2b_{1}}\left(ae^{-\beta t} + b\right) + D_{0}
\end{equation}

Further, we consider the time evolution of $V(t)$ based on the Grownwall Inequality \cite{ye2007generalized}.
Integrating on both sides of equation (\ref{dV}) from the last turn-on trigger time instant $t^{\text{on}}_{k}$ to the current time instant $t$, it can be obtained that:
\begin{equation}\label{Vjifentemp}
	\begin{aligned}
		V(t) &\le V(t^{\text{on}}_{k})e^{-C_{t}(t-t^{\text{on}}_{k})}+ \frac{ae^{-C_{t}t}}{2b_{1}}\int_{t^{\text{on}}_{k}}^{t}e^{(C_{t}-\beta)s}ds \\
		&\quad + \left(D_{0}+\frac{b}{2b_{1}}\right)e^{-C_{t}t}\int_{t^{\text{on}}_{k}}^{t}e^{C_{t}s}ds\\
	\end{aligned}
\end{equation}

We first consider the condition that $C_{t}\neq \beta $ holds (as $C_{t}=\beta$ is a special case), it can be derived that:
\begin{equation}\label{integV}
	\begin{aligned}
		V(t) &\le  V(t^{\text{on}}_{k})e^{-C_{t}(t-t^{\text{on}}_{k})}\\
		&\quad +\frac{ae^{-\beta t}}{2b_{1}(C_{t}-\beta)} -\frac{ae^{-\beta t^{\text{on}}_{k}}}{2b_{1}(C_{t}-\beta)}e^{-C_{t}(t - t^{\text{on}}_{k})}\\
		&\quad + \frac{1}{C_{t}}\left(D_{0}+\frac{b}{2b_{1}}\right)\left(1-e^{-C_{t}\left(t-t^{\text{on}}_{k}\right)}\right)\\
	\end{aligned}
\end{equation}
Let $M_{k}(t^{\text{on}}_{k}) \triangleq \frac{ae^{-\beta t^{\text{on}}_{k}}}{2b_{1}} $ and $N_{k} \triangleq D_{0} + \frac{b}{2b_{1}}$ for brevity, it can be further derived that:
\begin{equation}\begin{aligned}\label{Vjifen}
		V\left(t\right) &\le \left[V\left(t^{\text{on}}_{k}\right)-
		\frac{M_{k}(t^{\text{on}}_{k})}{C_{t}-\beta} -\frac{N_{k}}{C_{t}}\right]e^{-C_{t}\left(t-t^{\text{on}}_{k}\right)}\\
		&\quad +\frac{M_{k}(t^{\text{on}}_{k})e^{-\beta (t-t^{\text{on}}_{k})}}{C_{t}-\beta}+\frac{N_{k}}{C_{t}}\\
	\end{aligned}
\end{equation}

By comparing the $C_{t}$ and the designed $\beta$, the discussion of the time-evolution of $V(t)$ is divided into the following conditions, stated as follows:

\textbf{1.} For $C_{t} -\beta > 0$, it can be observed that $e^{-\beta (t-t^{\text{on}}_{k})} > e^{-C_{t}(t-t^{\text{on}}_{k})}$ holds. Therefore, from equation (\ref{Vjifen}), it can be further derived that:
\begin{equation}\label{Vcond1}
	V(t)< \left[V(t^{\text{on}}_{k})-\frac{N_{k}}{C_{t}}\right]e^{-\beta(t-t^{\text{on}}_{k})}+\frac{N_{k}}{C_{t}}
\end{equation}
\textbf{2.} We then consider the condition that $C_{t} =\beta$ holds. Note that $e^{C_{t}-\beta} \ge 1$ holds for $C_{t}-\beta \ge 0$, given any $C^{'}_{t} > \beta$ that satisfies the previous discussed condition \textbf{1}, i.e., $C^{'}_{t}-\beta>0$, we have:
\begin{equation}
	\int_{t^{\text{on}}_{k}}^{t}1ds< \int_{t^{\text{on}}_{k}}^{t}e^{(C^{'}_{t}-\beta)s}ds
\end{equation}

hence the time evolution of $V(t)$ in this special case will be lower than the upper boundary of $V(t)$ in condition $\textbf{1}$, which is given by equation (\ref{Vcond1}). Therefore, it can be concluded that the following relationship holds for $C_{t}=\beta$:
\begin{equation}\label{Vcond0}
	V(t)< \left[V(t^{\text{on}}_{k})-\frac{N_{k}}{C_{t}}\right]e^{-\beta(t-t^{\text{on}}_{k})}+\frac{N_{k}}{C_{t}}
\end{equation}
which is actually a same expression as in equation (\ref{Vcond1}).

\textbf{3.} For the circumstance that $C_{t} < \beta$ holds, we first rearrange the equation (\ref{Vjifen}) as follows:
\begin{equation}
	\begin{aligned}
		V(t) &\le	
		\left[V(t^{\text{on}}_{k})-\frac{N_{k}}{C_{t}}\right]e^{-C_{t}(t-t^{\text{on}}_{k})}+\frac{N_{k}}{C_{t}}\\
		&\quad +M(t^{\text{on}}_{k})\left[\frac{e^{-\beta(t-t^{\text{on}}_{k})} - e^{-C_{t}(t-t^{\text{on}}_{k})}}{C_{t}-\beta}\right]
	\end{aligned}
\end{equation}
Considering a binary function $E(r,q) \triangleq e^{-r\cdot q}(q,r>0)$. Taking the partial-derivative of $E(r,q)$ with respect to $r$, one can be obtained that $\frac{\partial E(r,q)}{\partial r} = -qe^{-r\cdot q} < 0$. Therefore, for $q= t-t^{\text{on}}_{k}$ and $r = C_{t}$ and $C_{t} < \beta$, we have:
\begin{equation}
	\begin{aligned}
		|-(t-t^{\text{on}}_{k})e^{-\beta(t-t^{\text{on}}_{k})}| &< |\frac{e^{-C_{t}(t-t^{\text{on}}_{k})}}{C_{t}-\beta} - \frac{e^{-\beta (t-t^{\text{on}}_{k})}}{C_{t}-\beta}|\\
		& < |-(t-t^{\text{on}}_{k})e^{-C_{t}(t-t^{\text{on}}_{k})}|
	\end{aligned}
\end{equation}
As stated in equation (\ref{turnoff}), the inter-event time between the $k$-th turn-on trigger time instant to the $k$-th turn-off trigger is restricted by the maximum allowed turn-on duration time $T_{\max}$, thus we have $t -t^{\text{on}}_{k} \le T_{\max}$. Accordingly, this yields the following result:
\begin{equation}\label{Vcond2}
	V(t) < \left[V(t^{\text{on}}_{k}) + M(t^{\text{on}}_{k})T_{\max}-\frac{N_{k}}{C_{t}}\right]e^{-C_{t}(t-t^{\text{on}}_{k})}+\frac{N_{k}}{C_{t}}
\end{equation}
Notably, recalling the definition of $M(t^{\text{on}}_{k}) = \frac{ae^{-\beta t^{\text{on}}_{k}}}{2b_{1}}$, we have $\lim_{k\to+\infty}M(t^{\text{on}}_{k}) = 0$.

The result in equation (\ref{Vcond1}) and (\ref{Vcond2}) indicates that the time evolution of $V(t)$ will beneath an exponentially-converged function. The right hand side of equation (\ref{Vcond1}) and (\ref{Vcond2}) can be regarded as an exponential function with an initial value $V(t^{\text{on}}_{k})$ for $C_{t}-\beta\ge0$ or $V(t^{\text{on}}_{k})+M_{k}(t^{\text{on}}_{k})T_{\max}$ for $C_{t}-\beta > 0$, and an exponential index $-\beta$ or $-C_{t}$, while the residual set can be uniformly given as follows:
\begin{equation}\label{RES}
	\lim_{k\to+\infty}\mathop{V}_{t>t^{\text{on}}_{k}}(t) < \frac{N_{k}}{C_{t}}
\end{equation}
For $t\in\left[t^{\text{on}}_{k},t^{\text{off}}_{k}\right]$, this completes the proof of the system's \textit{local boundedness} when actuators are turned on.
\begin{remark}
	The result in equation (\ref{RES}) hints that a larger $C_{t}$ will lead to a higher accuracy, or a small residual set equivalently. Oppositely, if $C_{t}$ is not big enough, recalling the definition of $N_{k}$, it can be noticed that a relatively small $b$ should be chosen to obtain a considerable performance.
\end{remark}

\subsection{Proof of Local Boundedness of $V(t)$ when Actuators are Turned Off}\label{SECTURNOFF}

Next, we discuss the system's behavior when actuators are turned off, focusing on the time interval between the $k$-th turn-off trigger time instant $t^{\text{off}}_{k}$ and the $k+1$-th turn-on trigger time instant $t^{\text{on}}_{k+1}$, i.e., $t\in\left[t^{\text{off}}_{k},t^{\text{on}}_{k+1}\right]$.

Considering the result for the $\boldsymbol{q}_{ev}$-system given in equation (\ref{dV1}), by applying the Peter-Paul 's inequality, it can be further obtained that:
\begin{equation}
	\begin{aligned}
		\dot{V}_{1} &\le -\frac{|q_{e0}|B_{\omega}}{2}\|\boldsymbol{q}_{ev}\|^{2} + \|\boldsymbol{z}_{2}\|\cdot\|\boldsymbol{P}_{q}\| \\
		&\le -\frac{|q_{e0}|B_{\omega}}{2}\|\boldsymbol{q}_{ev}\|^{2} + \frac{b_{2}}{2}\|\boldsymbol{q}_{ev}\|^{2} + \frac{1}{2b_{2}}\|\boldsymbol{z}_{2}\|^{2}\\
		&= -\left(|q_{e0}|B_{\omega} - b_{2}\right)V_{1} + \frac{1}{2b_{2}}\|\boldsymbol{z}_{2}\|^{2}
	\end{aligned}
\end{equation}
where $b_{2} > 0$ is a positive coefficient. 
For the writing convenience, let $m \triangleq \rho_{0}$ and $n \triangleq \rho_{\infty}(1-e^{-\gamma t})$ defined respectively.
Combining the result with the given actuator turn-on trigger mechanism (\ref{turnon}), we have:
\begin{equation}\label{dV11}
	\begin{aligned}
		\dot{V}_{1} \le -(|q_{e0}|B_{\omega}-b_{2})V_{1} + \frac{m}{2b_{2}}e^{-\gamma t} + \frac{n}{2b_{2}}
	\end{aligned}
\end{equation}
Further, considering the expression of $\dot{V}_{3}$, it can be derived that:
\begin{equation}\begin{aligned}\label{dV32}
		\dot{V}_{3} &\le -\left(p_{1}-\frac{K_{\tau}}{2}\right)\|\boldsymbol{\xi}\|^{2} - \left(p_{2}-\frac{K_{\tau}}{2}\right)\|\Delta\boldsymbol{\tau}\|^{2}
	\end{aligned}
\end{equation}
Recalling the definition of $C_{2} = p_{1} - \frac{K_{u} + K_{\tau}}{2}$, $C_{3} = p_{2} - \frac{K_{\tau}}{2}$ and the facts that parameters have been chosen to ensure that $C_{2}, C_{3} > 0$ holds, it can be further obtained that $\dot{V}_{3} \le -2C_{2}V_{3}$ holds.

Let $C_{d}$ be a design parameter-related constant, defined as $C_{d} \triangleq \min\left(\min(|q_{e0}|)B_{\omega} - b_{2}, 2C_{2}\right)$. Considering a combined Lyapunov function as $V_{n} = V_{1} + V_{3}$, we have:
\begin{equation}\label{dVn}
	\dot{V}_{n} \le -C_{d}V_{n} + \frac{m}{2b_{2}}e^{-\gamma t} + \frac{n}{2b_{2}}
\end{equation}

Similarly as we done previously, we first consider the condition that $C_{d}\neq\gamma$ holds. For $t>t^{\text{off}}_{k}$, integrating on both sides of equation (\ref{dVn}), it can be yielded that:
\begin{equation}
	\begin{aligned}\label{dV1_2}
		V_{n}(t) &\le V_{n}(t^{\text{off}}_{k})e^{-C_{d}(t-t^{\text{off}}_{k})}+ \frac{me^{-C_{d}t}}{2b_{2}}\int_{t^{\text{off}}_{k}}^{t}e^{(C_{d}-\gamma)s}ds\\
		&\quad + \frac{n}{2b_{2}}\int_{t^{\text{off}}_{k}}^{t}e^{-C_{d}\left(t - s\right)}ds\\
		&= V_{n}(t^{\text{off}}_{k})e^{-C_{d}(t-t^{\text{off}}_{k})} +\frac{me^{-\gamma t}}{2b_{2}(C_{d}-\gamma)} \\
		&\quad -\frac{me^{-\gamma t^{\text{off}}_{k}}}{2b_{2}(C_{d}-\gamma)}e^{-C_{d}(t - t^{\text{off}}_{k})}+ \frac{n}{2b_{2}C_{d}}\left(1-e^{-C_{d}\left(t-t^{\text{off}}_{k}\right)}\right)\\
	\end{aligned}
\end{equation}
We further define $M_{f}(t^{\text{off}}_{k}) \triangleq \frac{me^{-\gamma t^{\text{off}}_{k}}}{2b_{2}} $ and $N_{f} \triangleq \frac{n}{2b_{2}}$, hence equation (\ref{dV1_2}) can be further rearranged as follows:
\begin{equation}\label{Vn}
	\begin{aligned}
		V_{n}(t) &\le \left[V_{n}(t^{\text{off}}_{k}) - \frac{M_{f}(t^{\text{off}}_{k})}{C_{d}-\gamma}-\frac{N_{f}}{C_{d}}\right]e^{-C_{d}(t-t^{\text{off}}_{k})}\\
		&\quad +\frac{M_{f}(t^{\text{off}}_{k})e^{-\gamma (t-t^{\text{off}}_{k})}}{C_{d}-\gamma}+\frac{N_{f}}{C_{d}}\\
	\end{aligned}
\end{equation}

Further, We consider the time evolution of $V_{2}(t)$. As per Assumption \ref{ASSJ}, it can be obtained that $V_{2}\le\frac{\lambda_{J\max}}{2}\|\boldsymbol{z}_{2}\|^{2}$ holds. By utilizing the corollary derived from the turn-on trigger mechanism, we have:
\begin{equation}\label{VRHO}
	\begin{aligned}
		V_{2} \le \frac{\lambda_{J\max}}{2}\left[(\rho_{0}-\rho_{\infty})e^{-\gamma t}+\rho_{\infty}\right]
	\end{aligned}
\end{equation}
Defining the right side of equation (\ref{VRHO}) as a judgment function $\mathcal{J}_{2}(t)$, expressed as  $\mathcal{J}_{2}(t) \triangleq \frac{\lambda_{J\max}}{2}\left[(\rho_{0}-\rho_{\infty})e^{-\gamma t}+\rho_{\infty}\right]$. Accordingly, it can be yielded that:
\begin{equation}\label{J2}
	\begin{aligned}
		\mathcal{J}_{2}(t) &= \frac{\lambda_{J\max}e^{-\gamma t^{\text{off}}_{k}}}{2}(\rho_{0}-\rho_{\infty})e^{-\gamma(t-t^{\text{off}}_{k})}+\frac{\lambda_{J\max}\rho_{\infty}}{2}\\
		&= \left[\mathcal{J}_{2}(t^{\text{off}}_{k})-\frac{\lambda_{J\max}\rho_{\infty}}{2}\right]e^{-\gamma(t-t^{\text{off}}_{k})}+\frac{\lambda_{J\max}\rho_{\infty}}{2}\\
	\end{aligned}
\end{equation}

For the derivation of the final conclusion, considering a lumped Lyapunov function as $V_{a}(t) \triangleq V_{n}(t) + \mathcal{J}_{2}(t)$. Owing to the fact that $V_{2}(t)\le \mathcal{J}_{2}(t)$ holds, we have $V(t) \le V_{a}(t)$.

Combining with the result in equation (\ref{Vn}) and (\ref{J2}) and following a similar analysis in Subsection \ref{SECTURNON}, it can be concluded that:

\textbf{1.} For $C_{d}-\gamma > 0$, $e^{-C_{d}(t-t^{\text{off}}_{k})} < e^{-\gamma(t-t^{\text{off}}_{k})}$ will be satisfied. Correspondingly, we have:
\begin{equation}\label{Vcond5}
	\begin{aligned}
		V(t)&<\left[V_{a}(t^{\text{off}}_{k})-\frac{N_{f}}{C_{d}}-\frac{\lambda_{J\max}\rho_{\infty}}{2}\right]e^{-\gamma(t-t^{\text{off}}_{k})}\\
		&\quad +\frac{N_{f}}{C_{d}} +\frac{\lambda_{J\max}\rho_{\infty}}{2}
	\end{aligned}
\end{equation}
\textbf{2. } Similarly as we done previously, the result for the special case $C_{d}=\gamma$ can be directly derived as follows:
\begin{equation}\label{Vcond00}
	\begin{aligned}
		V(t)&<\left[V_{a}(t^{\text{off}}_{k})-\frac{N_{f}}{C_{d}}-\frac{\lambda_{J\max}\rho_{\infty}}{2}\right]e^{-\gamma(t-t^{\text{off}}_{k})} \\
		&\quad
		+\frac{N_{f}}{C_{d}} +\frac{\lambda_{J\max}\rho_{\infty}}{2}
	\end{aligned}
\end{equation}

\textbf{3.} For $C_{d} -\gamma < 0$, note that $e^{-\gamma(t-t^{\text{off}}_{k})} < e^{-C_{d}(t-t^{\text{off}}_{k})}$ holds, thus we have $\mathcal{J}_{2}(t) < \left[\mathcal{J}_{2}(t^{\text{off}}_{k}) - \frac{\lambda_{J\max}\rho_{\infty}}{2}\right]e^{-C_{d}(t-t^{\text{off}}_{k})} +\frac{\lambda_{J\max}\rho_{\infty}}{2}$. Combining these relationships, we have:
\begin{equation}\label{Vcond6}
	\begin{aligned}
		&V(t) \\
		<& \left[V_{a}(t^{\text{off}}_{k}) +  M_{f}(t^{\text{off}}_{k})T_{\text{maxoff}}-\frac{N_{f}}{C_{d}}-\frac{\lambda_{J\max}\rho_{\infty}}{2}\right]e^{-C_{d}(t-t^{\text{off}}_{k})} \\
		&\quad +\frac{N_{f}}{C_{d}} +\frac{\lambda_{J\max}\rho_{\infty}}{2}
	\end{aligned}
\end{equation}
where $T_{\text{maxoff}}$ is the maxima of the inter-event time between the $k$-th turn-off trigger to the $k+1$-th turn-on trigger. Owing to the fact that $\boldsymbol{z}_{2}$-system is perturbed by various perturbations, such as the time-derivative of $\boldsymbol{\omega}_{v}$ and the external disturbance, a maximum value $T_{\text{maxoff}}$ exists such that $T_{\text{maxoff}} \ge \max(t-t^{\text{off}}_{k})$ holds for all possible circumstances.

Consequently, owing to the designed turn-on trigger mechanism (\ref{turnon}), the time evolution of the lumped Lyapunov function $V(t)$ will be maintained beneath an exponentially converged function even when actuators are turned off, and the lumped system will finally fall into a uniformed region, expressed as follows:
\begin{equation}\label{RES1}
	\lim_{k\to\infty}\mathop{V}_{t>t^{\text{off}}_{k}}(t) < \frac{N_{f}}{C_{d}}+\frac{\lambda_{J\max}\rho_{\infty}}{2}
\end{equation}
Correspondingly, for $t\in\left[t^{\text{off}}_{k},t^{\text{on}}_{k+1}\right]$, this completes the proof of the system's \textit{local boundedness} when actuators are turned off.
\begin{remark}
	It can be inferred that the $\boldsymbol{z}_{2}$-system will not be able to converge exponentially under such a condition. However, it can be noticed intuitively that the designed turn-on mechanism (\ref{turnon}) will restrict such a divergence effect in a range such that it will not break up the whole system's stability.
\end{remark}

\subsection{Characterization of the Global Uniformly Continuous Upper Boundary of $V(t)$ on Whole Time domain}\label{SECWHOLETIME}

According to the analysis given in Subsection \ref{SECTURNON} and \ref{SECTURNOFF}, it can be observed that the given upper boundary of $V(t)$ shows different expression on each inter-event time, and these seperated upper boundary may even not connected with each other. Thus, the system's stability result cannot be directly yielded by those \textit{Local Boundedness} results. To facilitate the stability analysis on the entire time domain, this subsection further characterizes the system's behavior by introducing a property of the exponentially-converged function, thereby extending the \textit{locally boundedness} result that given by equation (\ref{Vcond1})(\ref{Vcond2})(\ref{Vcond5}) and (\ref{Vcond6}) to a global one.

The right hand side of equation (\ref{Vcond1})(\ref{Vcond2})(\ref{Vcond5}) and (\ref{Vcond6}) stands for an upper boundary of the time evolution of $V(t)$ on each inter-event time period under different conditions. Considering arbitrary $i$-th trigger time instant (including both turn-on trigger and the turn-off trigger), for all these four conditions, the upper boundary takes a similar exponentially-converged form, which can be characterized as follows:
\begin{equation}
	A^{i}(t) \triangleq (A^{i}(t^{i}_{\text{trig}}) - A^{i}_{\infty})e^{-L^{i}_{A}(t-t^{i}_{\text{trig}})} + A^{i}_{\infty}(t\in\left[t^{i}_{\text{trig}},t^{i+1}_{\text{trig}}\right])
\end{equation}
where $t^{i}_{\text{trig}}$ denotes the time instant of the counted $i$-th trigger. For instance, considering the expression given in equation (\ref{Vcond1}), it can be observed that $t^{i}_{\text{trig}} = t^{\text{on}}_{k}$, $t^{i+1}_{\text{trig}} = t^{\text{off}}_{k}$, $A(t^{i}_{\text{trigger}}) = V(t^{\text{on}}_{k})$, $A^{i}_{\infty} = \frac{N_{k}}{C_{t}}$ and $L^{i}_{A} = \beta$.

Note that for each $i$-th upper boundary function $A^{i}(t)$, we have the  following property:
\begin{equation}\label{AA}
	\begin{aligned}
		&(A^{i}(t^{i}_{\text{trig}}) - A^{i}_{\infty})e^{-L^{i}_{A}(t-t^{i}_{\text{trig}})} + A^{i}_{\infty}\\
		&\quad = (A^{i}(0)-A^{i}_{\infty})e^{-L^{i}_{A}t} +A^{i}_{\infty}
	\end{aligned}
\end{equation}
This indicates that the $i$-th upper boundary function $A^{i}(t)$ that defined on the inter-event time period $t\in\left[t^{i}_{\text{trig}},t^{i+1}_{\text{trig}}\right]$ is actually a part of an exponentially-converged function that defined on the whole time domain $t\in\left[0,+\infty\right)$. See Figure \ref{ex} for a further explanation. 
\begin{figure}[hbt!]
	\centering 
	\includegraphics[width=0.5\textwidth]{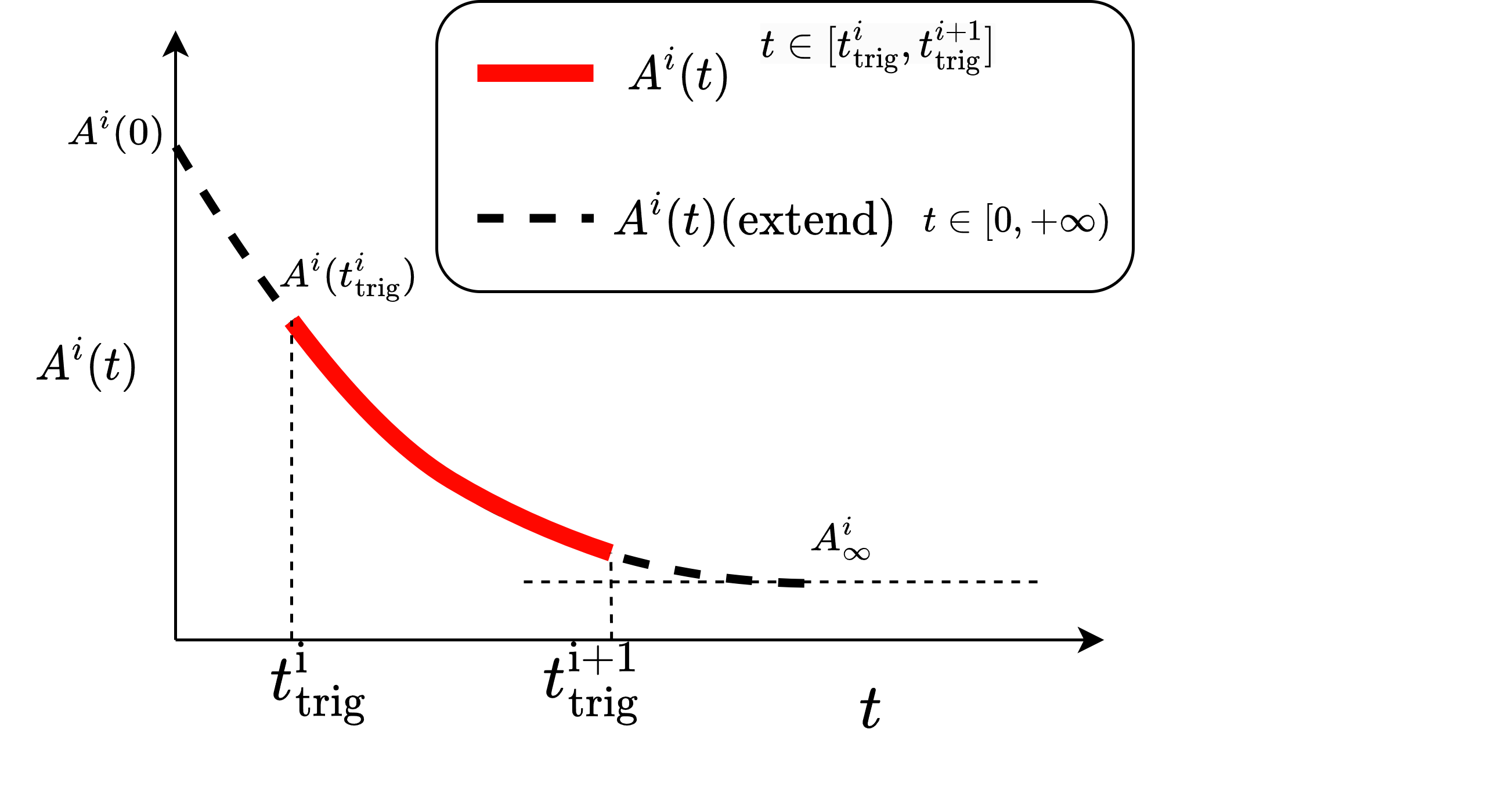}
	\caption{Extending of the local boundary $A^{i}(t)$ from $t\in\left[t^{i}_{\text{trig}},t^{i+1}_{\text{trig}}\right]$ to $t\in\left[0,+\infty\right)$}       
	\label{ex}   
\end{figure}
This allows us to extend the function segment to the whole time domain.

Regarding different conditions depicted in equation (\ref{Vcond1})(\ref{Vcond2})(\ref{Vcond5}) and (\ref{Vcond6}), for each $i$-th trigger time instant $t^{i}_{\text{trig}}$, four functions can be characterized accordingly, denoted as $A^{i}_{1}(t)$, $A^{i}_{2}(t)$, $A^{i}_{3}(t)$ and $A^{i}_{4}(t)$, respectively. Therefore, owing to the locally boundedness that we proved in Subsection \ref{SECTURNON} and \ref{SECTURNOFF}, it can be obtained that:
\begin{equation}\label{maxV}
	\begin{aligned}
		\mathop{V}_{t\in\left[t^{i}_{\text{trig}},t^{i+1}_{\text{trig}}\right]}< \max(A^{i}_{1}(t),A^{i}_{2}(t),A^{i}_{3}(t),A^{i}_{4}(t))
	\end{aligned}
\end{equation}
Define the right hand side of equation (\ref{maxV}) as $A^{i}_{\max}(t)$, which can be regarded as a locally maximum upper boundary. By utilizing the relationship in property given in (\ref{AA}), arbitrary local maximum upper boundary $A^{i}_{\max}$ defined on $t\in\left[t^{i}_{\text{trig}},t^{i+1}_{\text{trig}}\right]$ can be extend to a complete one that defined on $t\in\left[0,+\infty\right]$, denoted as $\bar{A}^{i}_{\max}(t)$. This is further illustrated as Figure \ref{EXTEND}.
\begin{figure}[hbt!]
	\centering 
	\includegraphics[width=0.3\textwidth]{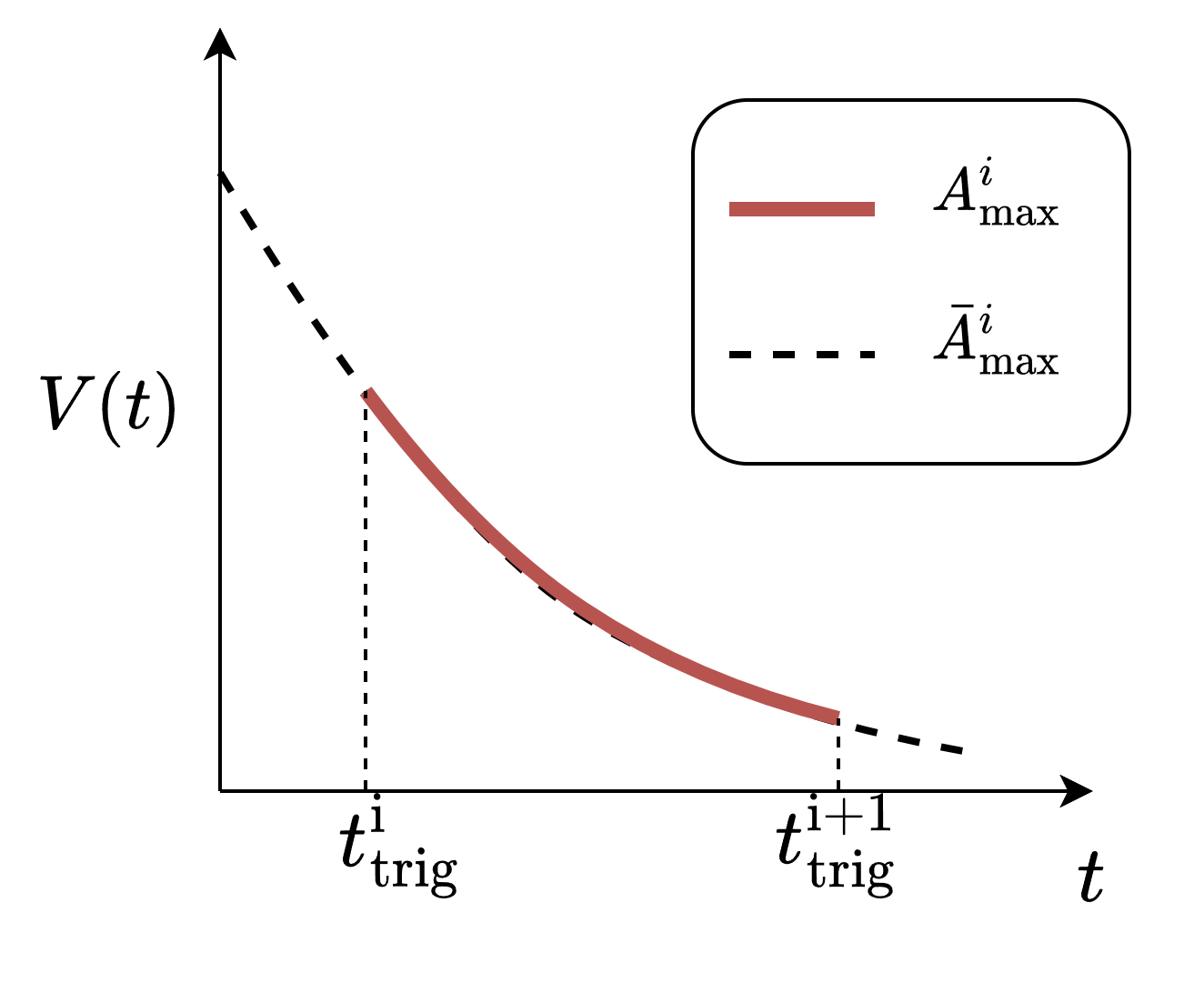}
	\caption{Sketch map of the extending from $A^{i}_{\max}$ to $\bar{A}^{i}_{\max}$}       
	\label{EXTEND}   
\end{figure}

Considering all these trigger events and define a function as:
\begin{equation}
	\mathcal{U}_{V}(t) \triangleq \max(\bar{A}^{i}_{\max}(t))(i=1,2,3,...)
\end{equation}
Obviously, we have:
\begin{equation}
	\mathop{V}_{t\in\left[0,+\infty\right)}< \mathcal{U}_{V}(t)
\end{equation}
Note that $\mathcal{U}_{V}(t)$ is a continuous function that consisted of many segmented exponentially-converged functions, as further explained by Figure \ref{UU}. It can be observed that $\mathcal{U}_{V}(t)$ is consisted of many parts of $\bar{A}^{i}_{\max}$ on different time intervals.
Since $\bar{A}^{i}_{\max}(i=1,2,3,4....)$ are all smooth continuous strictly decreasing exponentially-converged functions, thus $\mathcal{U}_{V}(t)$ is a uniformed upper boundary of $V(t)$.
Meanwhile, $\mathcal{U}_{V}(t)$ converges with an explicitly expressed residual set, such that:
\begin{equation}\label{RESFINAL}
	\lim_{i\to+\infty}\mathcal{U}_{V}(t) = \max(\frac{N_{k}}{C_{t}},\frac{N_{f}}{C_{d}}+\frac{\lambda_{J\max}\rho_{\infty}}{2})
\end{equation}
\begin{figure}[hbt!]
	\centering 
	\includegraphics[width=0.35\textwidth]{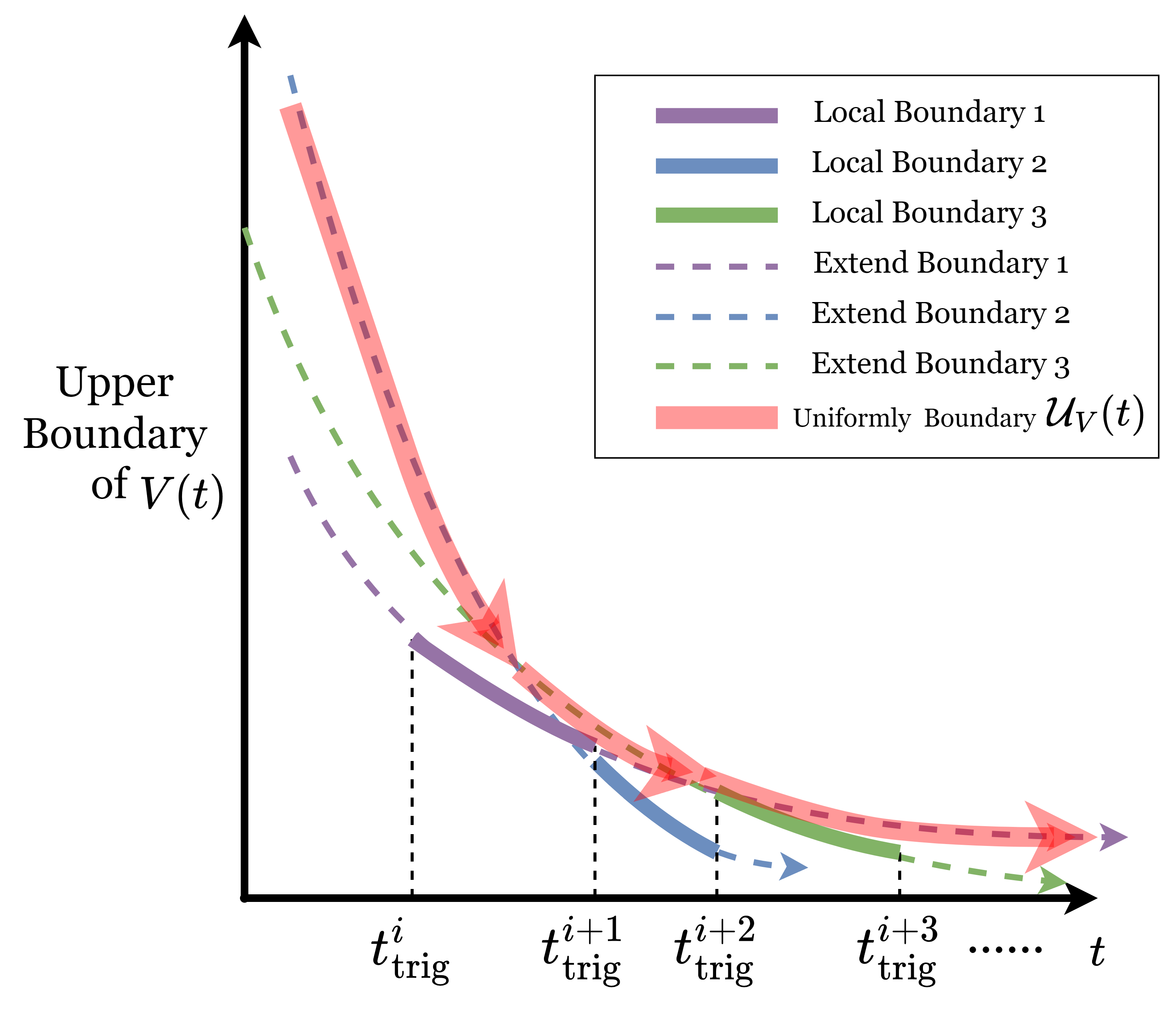}
	\caption{Sketch map of the uniformly upper boundary of $V(t)$, i.e., the $\mathcal{U}_{V}(t)$. These local bounds are separately placed, However, they all can be extend to a complete one (refers to the dotted line), which further constructs a continuous upper bound $\mathcal{U}_{V}(t)$}       
	\label{UU}   
\end{figure}
this suggests that the system is uniformly bounded.

\subsection{Suggestion for Parameter Selecting}\label{PARA}

According to the stated Stability Proof, we further discuss on main factors that govern system's behavior. 

Firstly, we consider the value of $C_{t}$ that defined as $C_{t} = \min\left(\min(|q_{e0}|B_{\omega},\frac{2C_{1}}{\lambda_{J\max}}),2C_{2}\right)$. Since $C_{1}, C_{2} > 0$ are all design parameters that can be defined to be sufficiently large to facilitate the stability, hence $C_{t}$ will be mainly determined by $\min(|q_{e0}|)B_{\omega}$, which is restricted by the given virtual control law's upper boundary $M_{\omega}$ and $|q_{e0}(t)|$'s initial condition. Therefore, practically speaking, $C_{t} < \beta$, $C_{d} < \gamma$ will be the most possible circumstance. Thus, a relatively big $\beta$ and $\gamma$ can be chosen accordingly to make $M(t^{\text{on}}_{k})$ and $M_{f}(t^{\text{off}}_{k})$ decreasing rapidly, thereby achieving a faster convergence behavior of the system. Notably, the value of $\beta$ and $\gamma$ should not be too large, or it may result in a too frequently trigger event. Accordingly, a trade-off should be made to balance the convergence rate and the trigger frequency.

Meanwhile, considering the residual set of $V$ that depicted in equation (\ref{RESFINAL}), it is mainly governed by $N_{k}$, $N_{f}$, and $\rho_{\infty}$. Since the system shows less accuracy when actuators are turned off, the maxima of the upper boundary will be mainly determined by $\frac{N_{f}}{C_{d}}+\frac{\lambda_{J\max}\rho_{\infty}}{2}$. Accordingly, a smaller $\rho_{\infty}$ can be chosen to obtain a low-frequency trigger of turn-on behavior. However, note that a small $\rho_{\infty}$ will improve the convergence accuracy of the system. This indicates that a trade-off should be made to make a balance between the trigger counts and the control accuracy. 

\bibliographystyle{IEEEtran}
\bibliography{etc_bib.bib}

\begin{IEEEbiography}
	[{\includegraphics[width=1in,height=1.25in,clip,keepaspectratio]{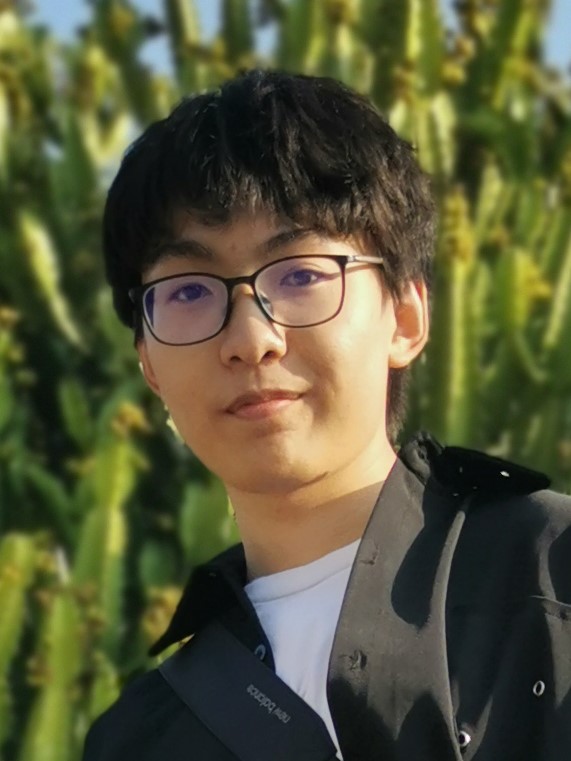}}]
	{Jiakun Lei}{\space} 
	received the B.S. degree in Automatic Control, from the University of Electronic Science and Technology of China(UESTC), Chengdu, China, in 2019. He is working toward a Ph.D. in aeronautical and astronautical science and technology at Zhejiang University, Hangzhou, China.
	His research interests include constrained attitude control, nonlinear hybrid control methodology, and attitude control of spacecraft with complex structures.
\end{IEEEbiography}

\begin{IEEEbiography}
	[{\includegraphics[width=1in,height=1.25in,clip,keepaspectratio]{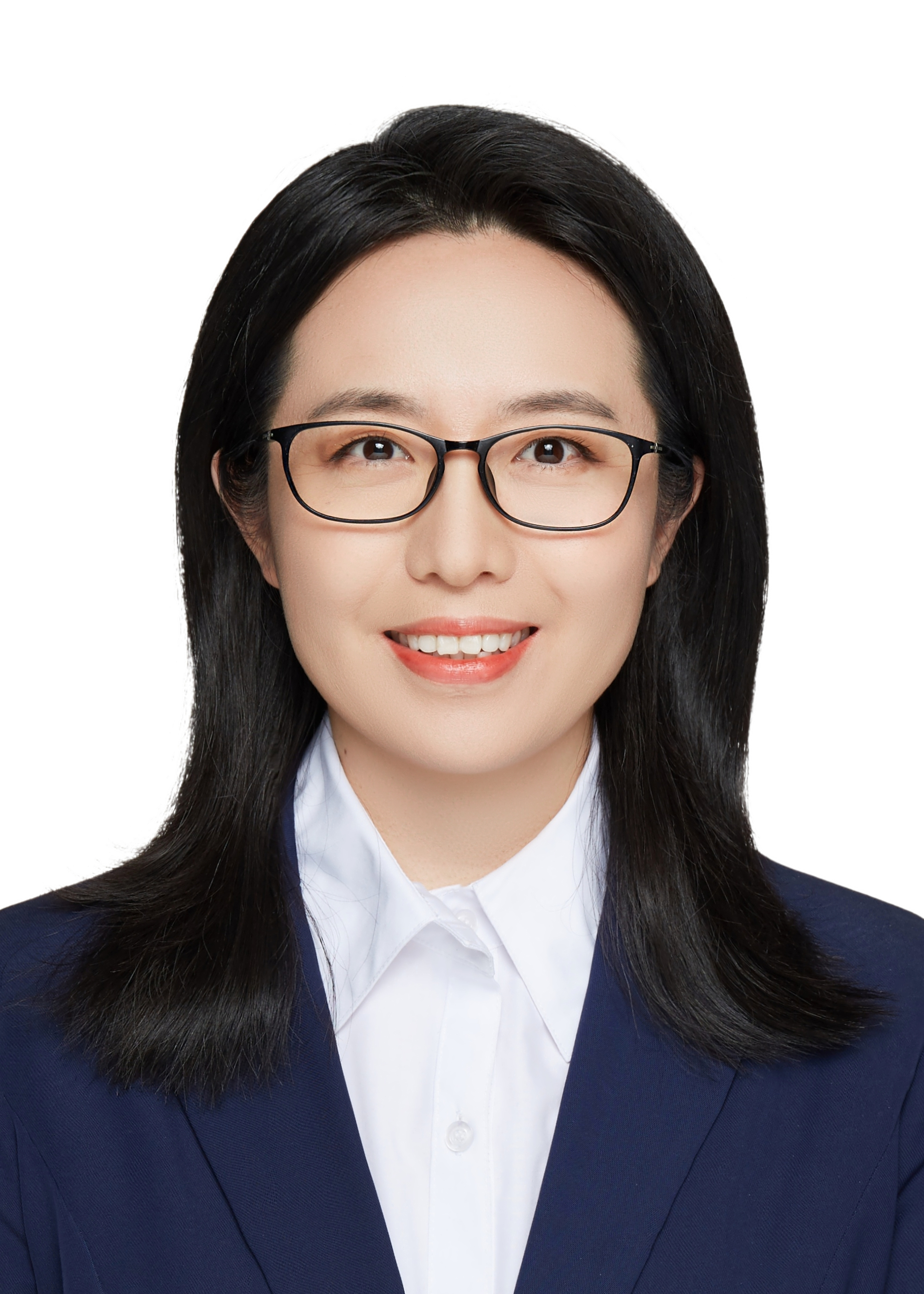}}]
	{Tao Meng}{\space}
	received the B.S. degree in Electronic science and technology, Zhejiang University, Hangzhou, China, in 2004, the M.S. degree in Electronic science and technology, Zhejiang University, Hangzhou, China, in 2006, and the Ph.D. degree in Electronic science and technology, Zhejiang University, Hangzhou, China, in 2009. She is currently a Professor at the School of Aeronautics and Astronautics. Her research interest includes attitude control, orbital control, and constellation formation control of micro-satellite.
\end{IEEEbiography}

\begin{IEEEbiography}
	[{\includegraphics[width=1in,height=1.25in,clip,keepaspectratio]{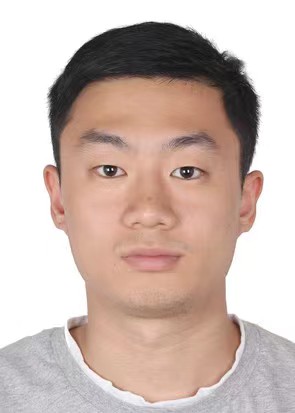}}]
	{Kun Wang}{\space}
	received the B.S. degree from college of control science and engineering, Zhejiang University, Hangzhou, China, in 2019. He is currently working toward the Ph.D. degree in aeronautical and astronautical science and technology in Zhejiang University, Hangzhou, China. His research interests include spacecraft 6-DOF control, spacecraft safety critical control and spacecraft formation control. 
\end{IEEEbiography}

\begin{IEEEbiography}
	[{\includegraphics[width=1in,height=1.25in,clip,keepaspectratio]{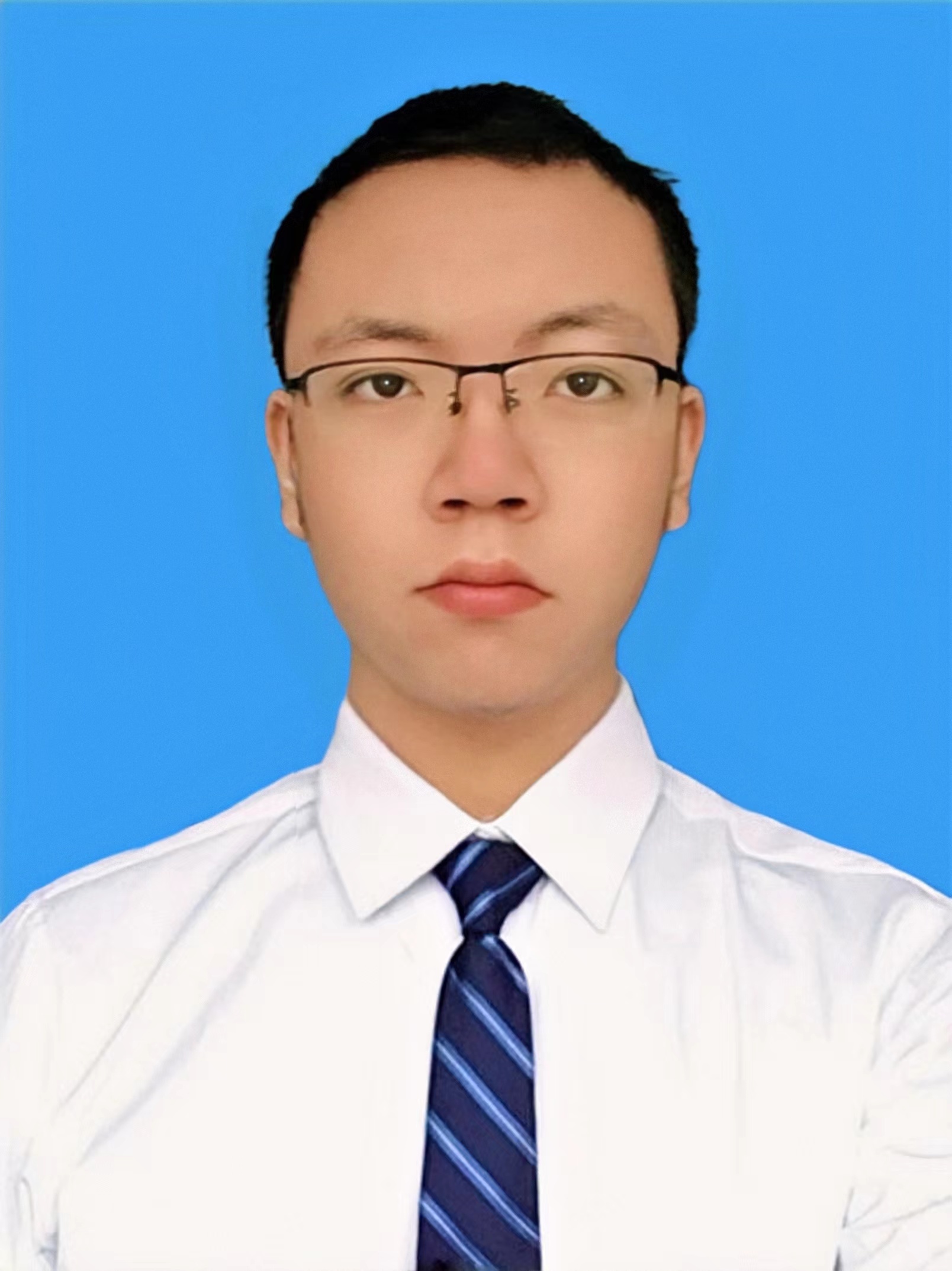}}]
	{Weijia Wang}{\space}
	received the B.S. degree in Aerospace Engineering, from the University of Electronic Science and Technology of China in 2020. He is working toward a Ph.D. in aeronautical and astronautical science and technology at Zhejiang University, Hangzhou, China. His research interests include model predictive control and learning-based adaptive control for 6-DOF spacecraft formation.
\end{IEEEbiography}

\begin{IEEEbiography}
	[{\includegraphics[width=1in,height=1.25in,clip,keepaspectratio]{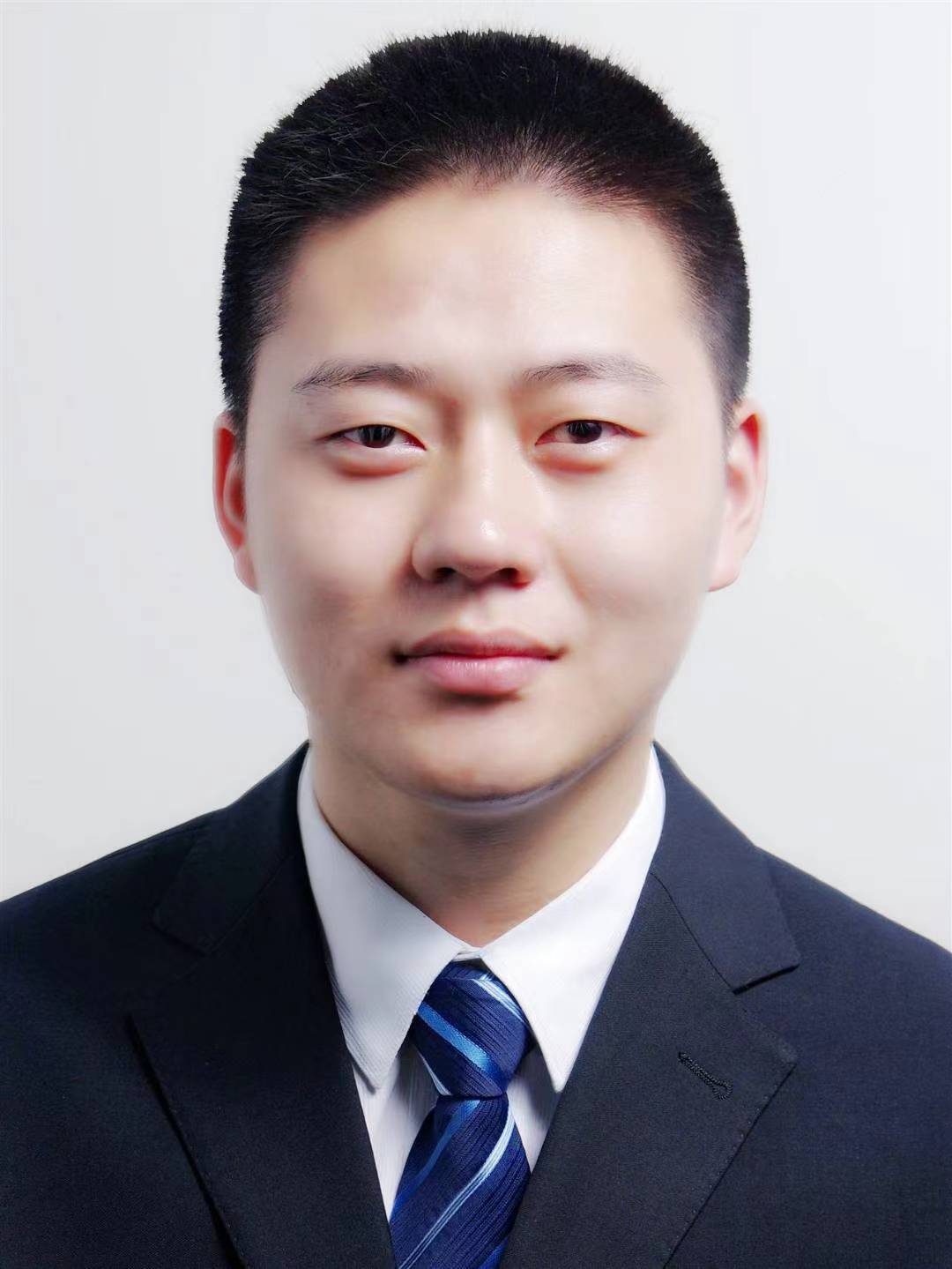}}]
	{Shujian Sun}{\space}
	received the B.S. degree from Electronic Information Engineering (Underwater Acoustic), Harbin Engineering University, Harbin, China, in 2013, and the Ph.D. degree from Electronic Science and Technology, Zhejiang University, Hangzhou, China, in 2020. He is recently the Assistant Professor of School of Aeronautics and Astronautics of Zhejiang University. His research interest include orbit control and formation flying of micro-satellite and micro-propulsion technology.
\end{IEEEbiography}

\end{document}